\DeclareMathOperator*{\E}{\mathbb{E}}
\let\Pr\relax
\DeclareMathOperator*{\Pr}{\mathbb{P}}
\newcommand\Oh{\mathcal{O}}
\newcommand{\eps}{\varepsilon}
\renewcommand{\log}{\lg}
\newcommand\bR{\mathbb{R}}
\renewcommand\P{\Pr}
\newcommand{\inprod}[1]{\langle #1 \rangle}
\newcommand\pStable{\mathcal{D}_p}
\newcommand{\eqdef}{\mathbin{\stackrel{\rm def}{=}}}
\newtheorem{theorem}{Theorem}
\newtheorem{lemma}[theorem]{Lemma}
\newtheorem{corollary}[theorem]{Corollary}
\newtheorem{definition}[theorem]{Definition}
\newcommand{\EquationName}[1]{\label{eq:#1}}
\newcommand{\LemmaName}[1]{\label{lem:#1}}
\newcommand{\CorollaryName}[1]{\label{cor:#1}}
\newcommand{\SectionName}[1]{\label{sec:#1}}
\newcommand{\TheoremName}[1]{\label{thm:#1}}
\newcommand{\Equation}[1]{Eq.\:\eqref{eq:#1}}
\newcommand{\Lemma}[1]{Lemma~\ref{lem:#1}}
\newcommand{\Corollary}[1]{Corollary~\ref{cor:#1}}
\newcommand{\Section}[1]{Section~\ref{sec:#1}}
\newcommand{\Theorem}[1]{Theorem~\ref{thm:#1}}
\title{Continuous monitoring of $\ell_p$ norms in data streams}
\author{
  Jaros{\l}aw B{\l}asiok\thanks{Harvard University. \texttt{jblasiok@g.harvard.edu}. Supported by ONR grant N00014-15-1-2388.}
  \and Jian Ding\thanks{University of Chicago. \texttt{jianding@galton.uchicago.edu}. Partially supported by NSF grant DMS-1455049 and an Alfred P.\ Sloan Research Fellowship.}
  \and Jelani Nelson\thanks{Harvard University. \texttt{minilek@seas.harvard.edu}. Supported by NSF grant IIS-1447471 and CAREER award CCF-1350670, ONR Young Investigator award N00014-15-1-2388, and a Google Faculty Research Award.}
}
\begin{document} 

\maketitle

\begin{abstract}
In insertion-only streaming, one sees a sequence of indices $a_1, a_2, \ldots, a_m\in [n]$. The stream defines a sequence of $m$ frequency vectors $x^{(1)},\ldots,x^{(m)}\in\bR^n$ with $(x^{(t)})_i \eqdef |\{j : j\in[t], a_j = i\}|$. That is, $x^{(t)}$ is the frequency vector after seeing the first $t$ items in the stream. Much work in the streaming literature focuses on estimating some function $f(x^{(m)})$. Many applications though require obtaining estimates at time $t$ of $f(x^{(t)})$, for every $t\in[m]$. Naively this guarantee is obtained by devising an algorithm with failure probability $\ll 1/m$, then performing a union bound over all stream updates to guarantee that all $m$ estimates are simultaneously accurate with good probability. When $f(x)$ is some $\ell_p$ norm of $x$, recent works have shown that this union bound is wasteful and better space complexity is possible for the continuous monitoring problem, with the strongest known results being for $p=2$ \cite{HuangTY14,BravermanCIW16,BravermanCINWW17}. In this work, we improve the state of the art for all $0<p<2$, which we obtain via a novel analysis of Indyk's $p$-stable sketch \cite{Indyk06}.
\end{abstract}

\section{Introduction}

Estimating statistics of frequency vectors implicitly defined by insertion-only update streams, as defined in the abstract, was first studied by Flajolet and Martin in \cite{FlajoletM85}. They studied the so-called {\em distinct elements problem}, in which $f(x)$ is the support size of $x$. In the insertion-only model, the support size of $x$ is equivalent to the number of distinct $a_i$ appearing in the stream. One goal in such streaming algorithms, both for this particular distinct elements problem as well as for many others function estimation problems studied in subsequent works, is to minimize the space consumption of the stream-processing algorithm, ideally using $o(n)$ words of memory (note there is always a trivial $n$ space algorithm by storing $x$ explicitly in memory).

For over two decades, work on estimating statistics of frequency vectors of streams remained dormant, until the work of \cite{AlonMS99} on estimating the $p$-norm $\|x\|_p = (\sum_i x_i^p)^{1/p}$ in streams for integer $p\ge 1$. Since then several works have studied these and several other problems, from the perspective of both upper and lower bounds, including estimating $\|x\|_p$ for all $0<p\le 2$ (not necessarily integral) \cite{AlonMS99,Indyk06,IndykW03,Woodruff04,Li08,Li09,KaneNW10b,NelsonW10,KaneNPW11,JayramW13}, $\|x\|_p$ for $p>2$ \cite{AlonMS99,BarYossefJKS04,ChakrabartiKS03,IndykW05,BhuvanagiriGKS06,Gronemeier09,Jayram09,AndoniKO11,BravermanO13,BravermanKSV14,Ganguly15}, empirical entropy \cite{ChakrabartiBM06,ChakrabartiCM10,BhuvanagiriG06,HarveyNO08} and other information-theoretic quantities \cite{IndykM08,GuhaIM08,BravermanO10}, cascaded norms \cite{CormodeM05b,JayramW09,Jayram13}, and several others. There have also been general theorems classifying which statistics of frequency vectors admit space-efficient streaming estimation algorithms \cite{BravermanO10a,BravermanC15,BravermanOR15,BravermanCWY16,BlasiokBCKY15}.

Taking a dynamic data structural viewpoint, ``streaming algorithms'' is simply a synonym for ``dynamic data structures'' but with an implied focus on minimizing memory consumption (typically striving for an algorithm using {\em sublinear} memory). Elements in the stream can be viewed as updates to the frequency vector $x$ (seeing $a\in[n]$ in the stream can be seen as $\texttt{update}$($a,1$), causing the change $x_a\rightarrow x_a + 1$), and the request for an estimate of some statistic of $x$ is a query. In this data structural language, all the works cited in the previous paragraph provide Monte-Carlo guarantees of the following form for queries: starting from any fixed frequency vector and after executing any fixed sequence of updates, the probability that the output of a subsequent query then fails is at most $\delta$. Here we say a query fails if, say, the output is not a good approximation to some particular $f(x)$ (this will be made more formal later). In many applications however, one does not simply want the answer to one query at the end of some large number of updates, but rather one wants to {\em continuously} monitor the data stream. That is, the sequence of data structural operations is an intermingling of updates and queries. For example, one may have a threshold $T$ in mind, and if $f(x)$ ever increases beyond $T$ some data analyst should be alerted. Such a goal could be achieved (approximately) by querying after every update to determine whether the updated frequency vector satisfies this property.  Indeed, the importance of supporting continuous queries in append-only databases (analogous to the insertion-only model of streaming) was recognized 25 years ago in \cite{TerryGNO92}, with several later works focused on continuous stream monitoring with application areas in mind such as trend detection, anomaly detection, financial data analysis, and (bio)sensor data analysis \cite{BabuW01,CarneyCCCLSSTZ02,OlstonJW03}.

If one assumes that a query is being issued after every update, then in a stream of $m$ updates the failure probability should be set to $\delta \ll 1/m$ so that, by a union bound, all queries succeed. Most Monte-Carlo streaming algorithms achieve some space $S$ to achieve failure probability $1/3$, at which point one can achieve failure probability $\delta$ by running $\Theta(\log(1/\delta))$ instantiations of the algorithm in parallel and returning the median estimate (see for example \cite{AlonMS99}). This method increases the space from $S$ to $\Theta(S\log(1/\delta))$, and for many problems (such as $\ell_p$-norm estimation) it is known that at least in the so-called strict turnstile model (i.e.\ $\texttt{update}$($a,\Delta$) is allowed for both positive and negative $\Delta$, but we are promised $x_i\ge 0$ for all $i$ at all times) this form of space blow-up is necessary \cite{JayramW13}. Nevertheless, although improved space lower bounds have been given when desiring that the answer to a {\em single} query fails with probability at most $\delta$, now such blow-up has been shown necessary for the continuous monitoring problem in which one wants, with failure probability $1/3$, to provide simultaneously correct answers for $m$ queries intermingled with $m$ updates. In fact to the contrary, in certain scenarios such as estimating distinct elements or the $\ell_2$-norm in insertion-only streams, improved {\em upper bounds} have been given!

\begin{definition}
We say a Monte-Carlo randomized streaming algorithm $\mathcal{A}$ provides {\em \textbf{strong tracking}} for $f$ in a stream of length $m$ with failure probability $\eta$ if at each time $t\in[m]$, $\mathcal A$ outputs an estimate $\tilde{f}_t$  such that
$$
\Pr(\exists t\in[m] : |\tilde{f}^t - f(x^{(t)})| > \eps f(x^{(t)})) < \eta .
$$
We say that $\mathcal A$ provides {\em \textbf{weak tracking}} for $f$ if 
$$
\Pr(\exists t\in[m] : |\tilde{f}^t - f(x^{(t)})| > \eps \sup_{t'\in[m]} f(x^{(t')})) < \eta .
$$
\end{definition}
Note if $f$ is monotonically increasing, then for insertion-only streams $\sup_{t'\in[m]} f(x^{(t')})$ is simply $f(x^{(m)})$.

The first non-trivial tracking result we are aware of which outperformed the median trick for insertion-only streaming was the \textsc{RoughEstimator} algorithm given in \cite{KaneNW10a} for estimating the number of distinct elements in a stream. \textsc{RoughEstimator} provided a strong tracking guarantee for $f(x) = |\mathop{support}(x)|$ (the distinct elements problem) for constant $\eps, \eta$, using the same space as what is what is required to answer only a single query. This strong tracking algorithm was used as a subroutine in the main {\em non-tracking} algorithm of that work for approximating the number of distinct elements in a data stream up to $1+\eps$.

For $\ell_p$-estimation for $p\in(0,2]$, without tracking, it is known that $\Oh(\eps^{-2}\log(1/\delta))$ words of memory is achievable to return a $(1+\eps)$-approximate value of $f(x) = \|x\|_p$ with failure probability $\delta$ \cite{AlonMS99,Indyk06,KaneNW10b}\footnote{For constant $\delta$ and $p=2$, \cite{AlonMS99} shows that space $\Oh(\eps^{-2}(\log n + \log\log m))$ {\em bits} is achievable in insertion-only streams.}. This upper bound thus implies a strong tracking algorithm with space complexity $\Oh(\eps^{-2}\log m)$ for tracking failure probability $\eta = 1/3$, by setting $\delta < 1/(3m)$ and performing a union bound. The work \cite{HuangTY14} considered the strong tracking variant of $\ell_p$-estimation in insertion-only streams for for any $p$ in the more restricted interval $(1,2]$. They showed that the same algorithms of \cite{AlonMS99,Indyk06}, unchanged, provide strong tracking with $\eta = 1/3$ with space $\Oh(\eps^{-2}(\log n + \log\log m + \log(1/\eps)))$ words\footnote{For $p=2$ their space is as written including the space required to store all hash functions, but for $1 < p < 2$ this space bound assumes that the storage of hash functions is for free.}. This is an improvement over the standard median trick and union bound when the stream length is very long ($m > n^{\omega(1)}$) and $\eps$ is not too small ($\eps > 1/m^{o(1)}$). They also showed that in an update model which allows deletions of items (``turnstile streaming''), any algorithm which only maintains a linear sketch $\Pi x$ of $x$ must use $\Omega(\log m)$ words of memory for constant $\eps$, showing that the median trick is optimal for this restricted class of algorithms.

A different algorithm was given in \cite{BravermanCIW16} for strong tracking for $\ell_2$ using space $\Oh(\eps^{-2}(\log(1/\eps) + \log\log m))$. It was then most recently shown in \cite{BravermanCINWW17} that the AMS sketch itself of \cite{AlonMS99} (though with $8$-wise independent hash functions instead of the original $4$-wise independence proposed in \cite{AlonMS99}) provides strong tracking in space $\Oh(\eps^{-2}\log\log m)$, and weak tracking in space $\Oh(1/\eps^2)$. That is, the AMS sketch provides weak tracking without any asymptotic increase in space complexity over the requirement to correctly answer only a single query.

Despite the progress in upper bounds for tracking $\ell_2$, the only non-trivial improvement for tracking $\ell_p$ is the $\Oh(\eps^{-2}(\log n + \log\log m + \log(1/\eps)))$ upper bound of \cite{HuangTY14}. Although this bound provides an improvement for very long streams ($m$ super-polynomial in $n$), it does not provide any improvement over the standard median trick for the case most commonly studied case in the literature of $m, n$ being polynomially related.

\paragraph{Our contribution.} We show that Indyk's $p$-stable sketch \cite{Indyk06} for $0<p\le 2$, derandomized using bounded independence as in \cite{KaneNW10b}, provides weak tracking while using $\Oh(\log(1/\eps)/\eps^2)$ words of space. It also provides strong tracking using $\Oh(\eps^{-2}(\log\log m + \log(1/\eps))$ words of space. Our bounds thus both improve the space complexity achieved in \cite{HuangTY14} for $\ell_p$-tracking, and well as the range of $p$ supported from $p\in(1,2]$ to all $p\in(0,2]$ (note for $p>2$, it is known that any algorithm requires polynomial space even to obtain a $2$-approximation for a single query, i.e.\ the non-tracking variant of the problem \cite{BarYossefJKS04}).

\section{Notation}
We use $[n]$ for integer $n$ to denote $\{1,\ldots,n\}$. We measure space in words unless stated otherwise, where a single word is at least $\log(nm)$ bits. For $p \in (0,2]$, we let $\pStable$ denote the symmetric $p$-stable distribution, scaled so that for $Z \sim \pStable$, $\P(|Z| > 1) = \frac{1}{2}$. The distribution $\mathcal{D}_p$ has the property that it is supported on the reals, and for any fixed vector $v\in\bR^n$ and $Z_1,\ldots,Z_n,Z$ i.i.d.\ from $\mathcal{D}_p$, $\sum_{i=1}^n Z_i x_i$ is equal in distribution to $\|x\|_p\cdot Z$. See \cite{Nolan17} for further reading on these distributions.

For two vectors $u, v \in \bR^n$ we write $u \preceq v$ to denote coordinatewise comparison, i.e. $u \preceq v$ iff $\forall_i u_i \leq v_i$. For a finite set $S$, we write $\# S$ to denote cardinality of this set.
\section{Preliminaries}

The following lemma is standard. A proof with explicit constants can be found in \cite[Theorem 42]{Nelson11}.

\begin{lemma}
    \LemmaName{p-stable-tails}
	If $Z \sim \pStable$, then $\P(Z > \lambda) \leq \frac{C_p}{\lambda^p}$ for some explicit constant $C_p$ depending only on $p$.
\end{lemma}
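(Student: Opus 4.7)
The plan is to deduce the polynomial tail bound from the well-known heavy-tailed behavior of the density of $\pStable$, obtained via Fourier inversion of its characteristic function. Writing $\phi(t) = \E e^{itZ} = e^{-c_p |t|^p}$ for the symmetric $p$-stable characteristic function (with $c_p$ the normalization constant that makes $\P(|Z|>1) = 1/2$), inversion yields
\[
f(\lambda) \;=\; \frac{1}{\pi}\int_0^\infty \cos(\lambda t)\, e^{-c_p t^p}\, dt.
\]
The first step is to prove that $f(\lambda) \leq A_p/\lambda^{p+1}$ for all $\lambda \geq 1$, with an explicit constant $A_p$. I would do this by Taylor-expanding $e^{-c_p t^p} = 1 - c_p t^p + R(t)$, noting the $1$-term contributes only a point mass at $0$ (irrelevant for $\lambda > 0$), the $-c_p t^p$ term yields, via the regularized Mellin integral $\int_0^\infty t^p \cos(\lambda t)\,dt = \Gamma(p+1)\sin(\pi p/2)/\lambda^{p+1}$, a leading asymptotic of the required form, and the remainder $R(t) = O(t^{2p})$ near $0$ contributes at most $O(\lambda^{-2p-1})$ after splitting the integral at $t = 1/\lambda$ and using the Gaussian-like decay of $e^{-c_p t^p}$ for large $t$.

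With the density bound in hand, the second step is just integration: for $\lambda \geq 1$,
\[
\P(Z > \lambda) \;=\; \int_\lambda^\infty f(u)\, du \;\leq\; \int_\lambda^\infty \frac{A_p}{u^{p+1}}\, du \;=\; \frac{A_p}{p\,\lambda^p}.
\]
For the remaining range $\lambda \in (0,1)$, the trivial estimate $\P(Z > \lambda) \leq 1 \leq 1/\lambda^p$ suffices. Setting $C_p \eqdef \max\{A_p/p,\,1\}$ yields the claimed bound for all $\lambda > 0$.

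The main technical obstacle is making rigorous the evaluation of $\int_0^\infty t^p \cos(\lambda t)\, dt$, which does not converge absolutely. I would handle this either by Abel regularization (inserting a factor $e^{-\varepsilon t}$, evaluating via the Gamma function, then taking $\varepsilon \to 0^+$), or by a contour rotation that turns the oscillatory integral into a Laplace-type integral $\int_0^\infty s^p e^{-\lambda s}\,ds$ times a trigonometric prefactor, where the Gamma identity applies directly. The remaining estimates on the Taylor remainder are routine once this regularization is justified; since the lemma is stated as standard (and the authors in fact cite \cite[Theorem 42]{Nelson11} for explicit constants), I would not grind through the exact value of $C_p$ but only verify that a finite one exists depending solely on $p$.
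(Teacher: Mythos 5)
The paper does not prove this lemma; it is stated as standard with a pointer to \cite[Theorem~42]{Nelson11}. The standard route (and, I would expect, the one in that reference) is considerably shorter than yours: one applies the truncation inequality relating tails to characteristic functions,
\begin{equation*}
\P(|X| \geq 2/u) \;\leq\; \frac{1}{u}\int_{-u}^{u}\bigl(1 - \phi(t)\bigr)\,dt \qquad (u>0),
\end{equation*}
which is nothing more than Fubini plus the pointwise bound $2\bigl(1 - \tfrac{\sin y}{y}\bigr) \geq \mathbf{1}_{\{|y| \geq 2\}}$, and then for $\phi(t) = e^{-c_p|t|^p}$ uses $1-e^{-x}\leq x$ to get $\frac{1}{u}\int_{-u}^u c_p|t|^p\,dt = \frac{2c_pu^p}{p+1}$, hence $\P(|Z|\geq \lambda)\leq \frac{2^{p+1}c_p}{(p+1)\lambda^p}$. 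No Fourier inversion, no density estimates, no asymptotic expansion.

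Your route is correct in outline and would even recover the sharper pointwise bound on the density, but as written it has a gap that is not merely a matter of constants. In the decomposition $e^{-c_pt^p} = 1 - c_pt^p + R(t)$, the remainder $R(t)=e^{-c_pt^p}-1+c_pt^p$ grows like $c_pt^p$ as $t\to\infty$, so $\int_0^\infty R(t)\cos(\lambda t)\,dt$ is exactly as non-convergent as the middle term you are regularizing; the divergences of the two pieces are meant to cancel, and splitting at $t=1/\lambda$ only controls the near-origin piece. On $[1/\lambda,\infty)$ the integrand is not absolutely integrable, and the claimed $O(\lambda^{-2p-1})$ rests on oscillatory cancellation that ``decay of $e^{-c_pt^p}$'' does not by itself supply --- you would need an integration by parts with the boundary terms tracked against the regularization of the $t^p$ term. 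The contour rotation you mention is the clean way to fix all of this, but note that for $p>1$ the sector where $\operatorname{Re}(z^p)>0$ has half-angle $\pi/(2p)<\pi/2$, so you cannot rotate all the way to the imaginary axis; some oscillation survives and the argument branches on $p\lessgtr 1$. All of this can be pushed through (it is the usual derivation of the tail asymptotics for stable densities), but for this lemma it is substantially heavier machinery than the three-line truncation-inequality argument.
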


We also state some other results we will need.

\begin{lemma}[Paley-Zygmund]
    If $Z \geq 0$ is a random variable with finite variance, then
    \begin{equation*}
        \P(Z > \theta \E Z) \geq (1 - \theta)^2 \frac{(\E Z)^2}{\E(Z^2)}.
    \end{equation*}
\end{lemma}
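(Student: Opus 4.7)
The plan is to prove the Paley-Zygmund inequality by the classical ``split and Cauchy-Schwarz'' argument. The key insight is that because $Z \geq 0$, we can decompose $\E Z$ according to whether or not $Z$ exceeds the threshold $\theta \E Z$, and then bound the contribution of the upper tail by Cauchy-Schwarz in terms of $\E(Z^2)$ and the probability we want to lower-bound.

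Concretely, I would first write $\E Z = \E[Z \mathbf{1}_{Z \leq \theta \E Z}] + \E[Z \mathbf{1}_{Z > \theta \E Z}]$. The first term is at most $\theta \E Z$ by definition of the indicator, so rearranging yields the lower bound $\E[Z \mathbf{1}_{Z > \theta \E Z}] \geq (1 - \theta) \E Z$ (this is where nonnegativity of $Z$ matters, to drop any contribution from the lower part without flipping signs).

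Next I would apply Cauchy-Schwarz to the upper-tail term:
\begin{equation*}
    \E[Z \mathbf{1}_{Z > \theta \E Z}] \leq \sqrt{\E(Z^2)} \cdot \sqrt{\E(\mathbf{1}_{Z > \theta \E Z}^2)} = \sqrt{\E(Z^2) \cdot \P(Z > \theta \E Z)}.
\end{equation*}
Combining the two inequalities gives $(1-\theta)^2 (\E Z)^2 \leq \E(Z^2) \cdot \P(Z > \theta \E Z)$, and dividing through by $\E(Z^2)$ (which is positive unless $Z = 0$ almost surely, in which case the bound is trivial) yields the claim.

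There is no real obstacle here; the only subtle point is remembering that $Z \geq 0$ is what allows us to discard the lower-range term cleanly, and that the inequality is trivial in the degenerate case $\E(Z^2) = 0$. One should also note the statement is only interesting for $\theta \in [0,1)$; for $\theta \geq 1$ the right-hand side is nonpositive (or the quantity $(1-\theta)^2$ still gives a valid but useless bound since Markov already dominates).
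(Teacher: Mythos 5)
Your proof is correct; it is the classical Cauchy--Schwarz argument for Paley--Zygmund, and the paper itself states this lemma as a standard fact without supplying a proof, so there is nothing to compare it against. Your handling of the nonnegativity hypothesis and the degenerate case $\E(Z^2)=0$ is appropriate.
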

\begin{corollary}
    \CorollaryName{large-dot-product}
	For fixed vector $v \in \bR^n$, if $\sigma \in \{\pm 1\}^n$ is a vector of 4-wise independent random signs, then
    \begin{equation*}
        \P( \langle \sigma, v \rangle^2 \geq \frac{2}{3}\|v\|_2^2) \geq \frac{1}{27}
        \label{}
    \end{equation*}
\end{corollary}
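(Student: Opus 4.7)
The plan is a direct application of Paley-Zygmund to the nonnegative random variable $Z \eqdef \langle \sigma, v\rangle^2$, so I need to compute $\E Z$ and bound $\E Z^2$.

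First, expand $\E Z = \sum_{i,j} v_i v_j \E[\sigma_i \sigma_j]$. By $2$-wise independence (subsumed by $4$-wise), $\E[\sigma_i \sigma_j] = \delta_{ij}$, so $\E Z = \|v\|_2^2$. Next, expand
\begin{equation*}
    \E Z^2 = \E \langle \sigma, v\rangle^4 = \sum_{i,j,k,l} v_i v_j v_k v_l \E[\sigma_i \sigma_j \sigma_k \sigma_l] .
\end{equation*}
By $4$-wise independence, $\E[\sigma_i \sigma_j \sigma_k \sigma_l]$ equals $1$ when each index in the multiset $\{i,j,k,l\}$ appears an even number of times, and $0$ otherwise. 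The surviving configurations are: (a) $i=j=k=l$, contributing $\sum_i v_i^4$; and (b) the indices split into two unordered pairs of equal value (with the two values distinct), and there are $3$ ways to partition $\{i,j,k,l\}$ into two pairs, so this contributes $3 \sum_{i \neq k} v_i^2 v_k^2$. Combining gives
\begin{equation*}
    \E Z^2 = \sum_i v_i^4 + 3 \sum_{i \neq k} v_i^2 v_k^2 = 3\|v\|_2^4 - 2\sum_i v_i^4 \leq 3 \|v\|_2^4 = 3 (\E Z)^2.
\end{equation*}

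Finally, apply Paley-Zygmund with $\theta = 2/3$:
\begin{equation*}
    \P(Z \geq \tfrac{2}{3} \E Z) \geq (1 - \tfrac{2}{3})^2 \cdot \frac{(\E Z)^2}{\E Z^2} \geq \frac{1}{9} \cdot \frac{1}{3} = \frac{1}{27} .
\end{equation*}
Since $\E Z = \|v\|_2^2$, this is exactly the claimed bound. There is no real obstacle here; the only slightly delicate point is the combinatorial bookkeeping for the fourth moment (the factor of $3$ from the number of pair partitions), and one should note that the coefficient $3$ of $(\E Z)^2$ is exactly what makes the constant $\theta = 2/3$ in the statement work out to a clean $1/27$.
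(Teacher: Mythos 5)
Your proof is correct and follows the same route as the paper: bound $\E\langle\sigma,v\rangle^4$ by $3\|v\|_2^4$ using $4$-wise independence, then apply Paley--Zygmund with $\theta = 2/3$. The only difference is that you spell out the combinatorial computation of the fourth moment, which the paper states without proof.
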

\begin{proof}
    This follows from $\E \langle \sigma, v\rangle^4 < 3 (\E \langle \sigma, v\rangle^2)^2$ and the Paley-Zygmund inequality.
\end{proof}

\begin{theorem}[{\cite[Theorem 15]{BravermanCIW16, BravermanCINWW17}}]
	Let $v^{(1)}, v^{(2)}, \ldots v^{(m)} \in \bR^n$, be a sequence of vectors such that $0 \preceq v^{(1)} \preceq v^{(2)} \preceq \ldots \preceq v^{(m)}$. Let $\sigma \in \{\pm 1\}^n$ be a vector of 4-wise independent random signs.
    Then 
	\begin{equation*}
		\P\left( \sup_{i \leq m} |\langle \sigma, v^{(i)} \rangle| > \lambda \|v^{(n)}\|_2\right) < \frac{C}{\lambda^2}
	\end{equation*}
	for some universal constant $C$.
    \TheoremName{chaining}
\end{theorem}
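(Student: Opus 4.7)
The plan is a generic chaining argument along dyadic scales of $\|v^{(t)}\|_2^2$. Set $R = \|v^{(m)}\|_2$ and observe that $\|v^{(t)}\|_2^2$ is non-decreasing in $t$ by coordinate-wise monotonicity. Setting $v^{(0)} := 0$, for each $j \ge 0$ I would build a net $A_j \subseteq \{0,1,\ldots,m\}$ by including, for each $i = 0, 1, \ldots, 2^j$, the smallest index $t$ at which $\|v^{(t)}\|_2^2 \ge i \cdot 2^{-j} R^2$. Then $|A_j| \le 2^j + 1$ and the nets nest: $A_{j-1} \subseteq A_j$. For every $t \le m$, let $\pi_j(t)$ denote the largest element of $A_j$ that is $\le t$; nesting gives the compatibility $\pi_{j-1}(t) = \pi_{j-1}(\pi_j(t))$.

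The key metric estimate is that $0 \preceq u \preceq w$ implies $\|w - u\|_2^2 \le \|w\|_2^2 - \|u\|_2^2$, since $\langle u, w - u\rangle \ge 0$. Combined with the threshold spacing at level $j-1$, this gives $\|v^{(a)} - v^{(\pi_{j-1}(a))}\|_2^2 \le 2^{-(j-1)} R^2$ for every $a \in A_j$. Using the standard bound $\E \langle \sigma, u\rangle^4 \le 3\|u\|_2^4$ for 4-wise independent signs, Markov at the fourth moment plus a union bound over $A_j$ yields, for any thresholds $\lambda_j > 0$,
\begin{equation*}
\P\Bigl(\exists\, a \in A_j : |\langle \sigma, v^{(a)} - v^{(\pi_{j-1}(a))}\rangle| > \lambda_j 2^{-(j-1)/2} R\Bigr) \le \frac{3(2^j + 1)}{\lambda_j^4}.
\end{equation*}

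Since $v^{(\pi_j(t))}$ converges to $v^{(t)}$ in norm and the index set is finite (so the series has only finitely many nonzero terms), the telescoping identity $\langle \sigma, v^{(t)}\rangle = \sum_{j \ge 1}\langle \sigma, v^{(\pi_j(t))} - v^{(\pi_{j-1}(t))}\rangle$ holds; using $\pi_{j-1}(t) = \pi_{j-1}(\pi_j(t))$ to pass from $\sup_t$ to $\max_{a \in A_j}$, on the intersection of the good events above we obtain
\begin{equation*}
\sup_{t \le m} |\langle \sigma, v^{(t)}\rangle| \le R \sum_{j \ge 1} \lambda_j 2^{-(j-1)/2}.
\end{equation*}
I would pick $\lambda_j = c \lambda \cdot 2^{3j/10}$ for a small absolute constant $c$: the exponents $3j/10 - j/2 = -j/5$ make the right-hand side a convergent geometric sum bounded by $\lambda R$, while the total failure probability $\sum_j 3(2^j+1)/\lambda_j^4$ evaluates to $O(\lambda^{-4})$, stronger than the required $O(\lambda^{-2})$ for $\lambda \ge 1$ (the statement is vacuous for smaller $\lambda$). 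The main technical obstacle to anticipate is the Dudley-style balancing of $\sum_j \lambda_j 2^{-j/2}$ (from the triangle inequality over the chain) against $\sum_j 2^j/\lambda_j^4$ (from the union bound at each level), which pins down the exponents in the $\lambda_j$; monotonicity enters the whole argument only through the clean distance bound $\|w - u\|_2^2 \le \|w\|_2^2 - \|u\|_2^2$.
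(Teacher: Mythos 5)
The paper does not prove \Theorem{chaining}; it cites it as Theorem 15 of \cite{BravermanCIW16,BravermanCINWW17}, so there is no internal proof to compare against. Your argument is a correct, self-contained chaining proof in the same spirit as the cited one: nets along dyadic level sets of the monotone quantity $\|v^{(t)}\|_2^2$, the distance bound $0\preceq u\preceq w\Rightarrow\|w-u\|_2^2\le\|w\|_2^2-\|u\|_2^2$, fourth-moment control of each increment via $4$-wise independence, and a correct balance of the two geometric series (any $\lambda_j=c\lambda 2^{\alpha j}$ with $\alpha\in(1/4,1/2)$ works; your $\alpha=3/10$ is a valid choice). Since the statement is quoted rather than proved in the paper, this reproduces the intended argument rather than diverging from it.

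One small bug to fix. With your construction $A_0$ contains two points, $0$ and the first index $m'$ at which $\|v^{(t)}\|_2^2$ reaches $R^2$. For $t\ge m'$ one then has $\pi_j(t)=m'$ for every $j\ge 0$ (since $m'\in A_0\subseteq A_j$), so the right-hand side of your telescoping identity collapses to $0$ while the left-hand side is $\langle\sigma,v^{(m')}\rangle$, which need not vanish; the identity as stated therefore fails exactly when $\pi_0(t)\ne 0$. The immediate fix is to set $A_0:=\{0\}$ only (still nested in $A_1$ and still satisfying $|A_0|\le 2^0+1$), so that $\pi_0(t)=0$ for all $t$; then the level-$1$ increment obeys $\|v^{(a)}-v^{(\pi_0(a))}\|_2^2=\|v^{(a)}\|_2^2\le R^2$, which is exactly your claimed bound $2^{-(j-1)}R^2$ at $j=1$, and the remainder of the argument goes through unchanged. (Equivalently, one could start the union bound at $j=0$ with $\pi_{-1}\equiv 0$.)
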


\begin{theorem}{{\cite{KaneNW10b,DiakonikolasKN10}}}
	\TheoremName{kwise-pstable}
	If $Z_i \sim \pStable$ for $i \in [n]$ are $k$-wise independent random variables, then for every vector $x \in \bR^n$ and every pair $a, b \in \bR \cup \{\pm\infty\}$ we have
	\begin{equation*}
		\P(\inprod{Z, x} \in (a,b) ) = \P(\|x\|_p Z_1 \in (a,b)) \pm \Oh(k^{-1/p})
		\label{}
	\end{equation*}
\end{theorem}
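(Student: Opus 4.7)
The plan is to apply FT-mollification in the style of \cite{DiakonikolasKN10} combined with a truncation of the $p$-stable variables as in \cite{KaneNW10b}. After normalizing to $\|x\|_p = 1$, write $Y = \langle Z,x\rangle$ (under $k$-wise independence) and $Y^* = Z_1$; the target is $|\P(Y \in (a,b)) - \P(Y^* \in (a,b))| = \Oh(k^{-1/p})$. A naive Taylor expansion of $\mathbf{1}_{(a,b)}$ in the $Z_i$'s fails because $\pStable$ has infinite $q$-th moments for $q \ge p$; this motivates the truncation step.

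First, fix $T > 0$ and set $\bar Z_i = Z_i \mathbf{1}_{|Z_i x_i| \le T}$, with $\tilde Y = \sum_i \bar Z_i x_i$ and its fully independent analog $\tilde Y^*$. By \Lemma{p-stable-tails} and a union bound (using only $1$-wise independence), $\P(\tilde Y \ne Y), \P(\tilde Y^* \ne Y^*) = \Oh(T^{-p})$. Next, approximate $f = \mathbf{1}_{(a,b)}$ by its Diakonikolas-Kane-Nelson FT-mollification $\tilde f_c$: a $C^\infty$ function with $\|\tilde f_c^{(\ell)}\|_\infty = \Oh(c^\ell)$ for all $\ell \le k$ and $|f(t) - \tilde f_c(t)| \le \min\{1, (c\cdot\mathrm{dist}(t,\{a,b\}))^{-k}\}$.

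The central step is moment matching: writing $\tilde f_c = P_{k-1} + R_k$ with $P_{k-1}$ the degree-$(k-1)$ Taylor polynomial about $0$, the expansion of $P_{k-1}(\tilde Y)$ is a sum of monomials in the $\bar Z_i x_i$ of total degree $\le k-1$, each touching at most $k-1$ distinct indices; $k$-wise independence therefore yields $\E P_{k-1}(\tilde Y) = \E P_{k-1}(\tilde Y^*)$. The remainder satisfies $|\E R_k(\tilde Y)| \le \Oh(c^k/k!)\,\E|\tilde Y|^k$, and similarly for $\tilde Y^*$. Since the $\bar Z_i$'s are bounded, symmetric, and $k$-wise independent, a Rosenthal-type inequality --- together with the $p$-stable tail estimates $\E \bar Z_i^{2\ell} = \Oh((T/|x_i|)^{2\ell-p})$ for $\ell \ge 1$ --- gives a bound on $\E|\tilde Y|^k$ purely in terms of $T$, $p$, and $k$.

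Finally, the mollification error $|\E f(\tilde Y^*) - \E \tilde f_c(\tilde Y^*)| = \Oh(1/c)$ follows from the bounded density of $\tilde Y^*$ (which differs negligibly from the $\pStable$ density of $Y^*$); the analogous bound for $\tilde Y$ transfers via the moment-matching step applied to slightly thickened intervals. Collecting the three error sources --- truncation $T^{-p}$, mollification $1/c$, and Taylor remainder --- and optimizing $T$ and $c$ yields the advertised $\Oh(k^{-1/p})$. The main technical obstacle is extracting the sharp exponent $1/p$ rather than a weaker polynomial rate: this requires invoking the heavy-tail (Rosenthal-type, rather than purely $L^2$-Khintchine) moment bound, so that the $p$-stable tail profile is used quantitatively in the final parameter balance.
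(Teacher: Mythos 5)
The theorem is cited to \cite{KaneNW10b,DiakonikolasKN10} and not reproved in this paper, so you are reconstructing the argument of the cited works. Your high-level skeleton --- truncate the heavy $p$-stable tails, FT-mollify the indicator, Taylor-expand the mollification to degree $k-1$ and match moments by $k$-wise independence, then bound the remainder by a Rosenthal-type moment inequality --- is indeed the framework those papers use. But as sketched, the final ``optimize $T$ and $c$'' step does not close, and this is a genuine gap rather than a detail.

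Concretely, your truncation $|Z_i x_i|\le T$ gives $\sum_i \E|W_i|^k = \Theta(T^{k-p})$ (with $W_i = \bar Z_i x_i$ and $\|x\|_p=1$) and $\sum_i \E W_i^2 = \Theta(T^{2-p})$. Feeding this into any Rosenthal-type bound and combining with $\|\tilde f_c^{(k)}\|_\infty = (Cc)^k$ and the $1/k!$ from Taylor, the degree-$k$ remainder is governed essentially by $(\,C'\,cT\,/\,\mathrm{polylog}\,k\,)^k$; in particular one needs $cT \lesssim \mathrm{polylog}(k)$ for the remainder to vanish. But then $1/c \gtrsim T/\mathrm{polylog}(k)$, so the total error $1/c + T^{-p}$ is minimized at $T = \Theta(\mathrm{polylog}(k)^{1/(1+p)})$, giving only a $\mathrm{polylog}(k)^{-\Omega(1)}$ error --- far from the claimed $k^{-1/p}$. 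Even ignoring Rosenthal constants the two constraints $c \ge k^{1/p}$ (mollification) and $T \ge k^{1/p^2}$ (truncation) together with $cT \lesssim k$ (Taylor remainder) force $1/p + 1/p^2 \le 1$, i.e.\ $p \ge (1+\sqrt5)/2$, and the stronger constraint from the $\sigma$-term forces $p \ge 2$. So a single global Taylor expansion around $0$ with a single truncation threshold cannot give the advertised rate for general $p \in (0,2)$; it barely works at $p = 2$ (the Gaussian/DKN case) and fails below that.

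The missing ingredient is structural, not a sharper moment inequality. The proofs in \cite{KaneNW10b,DiakonikolasKN10} and the related bounded-independence-fools-halfspaces/PTFs literature handle exactly this tension by separating scales: one isolates the coordinates where $|Z_i x_i|$ can be large (a ``critical index'' or regularity decomposition, conditioning on the few heavy contributions) and applies the mollification-plus-moment-matching argument only to the regular remainder, where the effective moment growth is much tamer than $T^{k-p}$. Without that decomposition, the last paragraph of your sketch --- ``collecting the three error sources \ldots and optimizing $T$ and $c$ yields $\Oh(k^{-1/p})$'' --- is not a step one can actually carry out.
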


\begin{theorem}{\cite[Lemma 2.3]{BellareR94}}
	\TheoremName{chernoff}
	Let $X_1, \ldots X_n \in \{0, 1\}$ be a sequence of $k$-wise independent random variables, and let $\mu = \sum \E X_i$. Then
	\begin{equation*}
		\forall\lambda > 0,\ \P(\sum X_i \geq (1 + \lambda) \mu) \leq \exp(-\Omega(\min\{\lambda, \lambda^2\} \mu)) + \exp(-\Omega(k))
		\label{}
	\end{equation*}
\end{theorem}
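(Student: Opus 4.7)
My plan is to prove this by the method of moments after centering. Let $Y_i = X_i - \E X_i$, so each $Y_i$ lies in $[-1,1]$ with $\E Y_i = 0$ and $\sum_i \E Y_i^2 \leq \sum_i \E X_i = \mu$. The key structural observation is that for any positive even integer $k' \leq k$, expanding $\E[(\sum_i Y_i)^{k'}]$ via the multinomial theorem yields a sum of joint moments involving at most $k'$ of the $Y_i$'s; since $k' \leq k$ and the $X_i$'s (hence $Y_i$'s) are $k$-wise independent, each such joint moment agrees exactly with its value under full independence. Therefore I can bound this central moment as if the $Y_i$'s were fully independent.

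For the fully independent estimate I would invoke a Rosenthal/Bennett-type bound: under the hypotheses above,
\begin{equation*}
\E\left[\left(\sum_i Y_i\right)^{k'}\right] \leq A^{k'}\bigl((k'\mu)^{k'/2} + (k')^{k'}\bigr)
\end{equation*}
for some absolute constant $A$. This is classical and proved by decomposing the multinomial expansion according to how many distinct indices appear: only terms in which every index appears at least twice contribute, and counting such partitions of $\{1,\ldots,k'\}$ together with the marginal estimates $|\E Y_i^a| \leq \E X_i$ for $a \geq 2$ produces the two summands --- one from the regime where many small blocks appear (Gaussian-like) and the other from regimes with few large blocks (Poisson-like).

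Applying Markov's inequality to the even power $(\sum_i Y_i)^{k'}$ then yields
\begin{equation*}
\P\left(\sum_i X_i \geq (1+\lambda)\mu\right) \leq \frac{\E[(\sum_i Y_i)^{k'}]}{(\lambda\mu)^{k'}} \leq \left(\frac{Ak'}{\lambda^2\mu}\right)^{k'/2} + \left(\frac{Ak'}{\lambda\mu}\right)^{k'}.
\end{equation*}
I then optimize over even $k' \in [2,k]$. If there is room, i.e.\ $c\min(\lambda,\lambda^2)\mu \leq k$ for a sufficiently small constant $c$, I pick $k'$ to be the largest even integer at most $c\min(\lambda,\lambda^2)\mu$; this forces both bases below $1/e$ simultaneously (noting $\min(\lambda,\lambda^2)$ handles the two cases $\lambda \leq 1$ and $\lambda > 1$), giving $\exp(-\Omega(\min(\lambda,\lambda^2)\mu))$. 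Otherwise I set $k'$ to be $k$ rounded down to even, and the same computation gives $\exp(-\Omega(k))$. Summing the two alternatives produces exactly the stated right-hand side.

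I expect the main technical obstacle to be the Rosenthal-type moment estimate itself: arranging the partition enumeration so that both the Gaussian regime $(k'\mu)^{k'/2}$ and the Poisson regime $(k')^{k'}$ emerge from a single combinatorial argument is the only delicate piece, though it is by now standard in the limited-independence literature and could alternatively be invoked as a black box from that body of work.
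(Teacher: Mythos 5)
The paper does not prove this theorem; it simply cites it as Lemma~2.3 of Bellare--Rompel. Your argument --- center the variables, note that $k$-wise independence lets the $k'$-th moment ($k'\leq k$ even) be computed as under full independence, invoke the Rosenthal/Bennett-type moment bound $\E(\sum Y_i)^{k'}\leq A^{k'}\bigl((k'\mu)^{k'/2}+(k')^{k'}\bigr)$, apply Markov, and optimize $k'$ against $\min(\lambda,\lambda^2)\mu$ versus $k$ --- is exactly the proof in that cited source (their Lemma~2.2 gives $\E(\sum Y_i)^{t}\leq 8(t\mu+t^2)^{t/2}$, which is your moment bound up to the absolute constant), so the approach is the same and the argument is correct. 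The only cosmetic caveat worth recording is the regime where the would-be optimal even $k'$ drops below $2$: there $\min(\lambda,\lambda^2)\mu=\Oh(1)$ and the right-hand side is $\Omega(1)$ anyway, so the bound holds trivially once one allows the implicit multiplicative constant customary in such $\exp(-\Omega(\cdot))$ statements.
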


\section{Overview of approach}
Indyk's $p$-stable sketch picks a random matrix $\Pi \in \bR^{d \times n}$ such that each entry is drawn according to the distribution $\pStable$. It then maintains the sketch $\Pi x^{(t)}$ of the current frequency vector. This sketch can be easily updated as the frequency vector changes, i.e.\ after observing an index $a_j \in [n]$ we update the sketch by $\Pi x^{(t+1)} := \Pi x^{(t)} + \Pi e_{a_j}$. An $\|x\|_p$-estimate query is answered by returning the median of $|\Pi x^{(i)}|_j$ over $j\in[d]$. Since storing $\Pi$ in memory explicitly is prohibitively expensive, we generate it so that the entries in each row are $k$-wise independent for $k = \Oh(1/\eps^p)$ (as done in \cite{KaneNW10b}), and the $d$ seeds used to generate the rows of $\Pi$ are $\Oh(\log(1/(\eps\delta)))$-wise independent. We also work with discretized $p$-stable random variables to take bounded memory. All together, the bounded independence and discretization, also performed in \cite{KaneNW10b}, allow us to store $\Pi$ using low memory.

We then show that instantiating Indyk's algorithm with $d = \Oh(\eps^{-2}\log(1/(\eps\delta)))$ provides the weak tracking guarantee with failure probability $\delta$. The analysis of the correctness of this algorithm is as follows. Let $\pi_i$ denote the $i$th row of $\Pi$. We first show a result resembling the Doob's martingale inequality --- namely, in \Section{analysis} we show that for a fixed $i$, if we look at the evolution of $\inprod{ \pi_i, x^{(t)}}$ as $t$ increases, the largest attained value $(\sup_{t \leq m} \inprod{\pi_i, x^{(t)}})$ is with good probability not much larger than the median of the distribution $|\inprod{\pi_i, x^{(m)}}|$, which is the typical magnitude of the counter at the end of the stream. This fact resembles similar facts shown in \cite{BravermanCIW16,BravermanCINWW17} for when the $\pi_i$ have independent Rademachers as entries, though our situation is complicated by the fact that $p$-stable random variables have much heavier tails.

We then, discussed in \Section{weak-tracking}, show how the previous paragraph implies a weak tracking algorithm with $d = \Oh(\eps^{-2}\log(1/(\eps\delta)))$: we split the sequence of updates into $\mathop{poly}(1/\eps)$ intervals such that the $\ell_p$-norm of the frequency vector of updates in each of those intervals, i.e.\ $\|x^{(t+1)} - x^{(t)}\|_p$, is of the order $\varepsilon^{\Theta(1)} \|x^{(m)}\|_p$. We then union bound over the $\mathop{poly}(1/\eps)$ intervals to argue that the algorithm's estimate is good at each of the interval endpoints. This is the source of the extra factor of $\log(1/\eps)$ in our space bound: to obtain $\eps^{-\Omega(1)}$ failure probability to union bound over these intervals. On the other hand, within each of the intervals most of the counters do not change too rapidly by the argument developed in \Section{analysis}.

Finally, in \Section{strong-tracking} we show how given an algorithm satisfying a weak tracking guarantee, one can use it to get a strong-tracking algorithm with slightly larger space complexity. This argument was already present in \cite{BravermanCINWW17}. One first identifies $q$ points in the input stream at which the $\ell_p$ norm roughly doubles when compared to the previously marked point. There are only $\Oh(\log m)$ such intervals. It is then enough to ensure that our algorithm satisfies weak tracking for all those $\Oh(\log m)$ prefixes simultaneously, in order to deduce that the algorithm in fact satisfies strong tracking. This is done by union bound over $\Oh(\log m)$ bad events (as opposed to standard union bound over $\Oh(m)$ bad events), which introduces an extra $\log\log m$ factor in the space complexity as when compared to weak tracking.

\section{Analysis \SectionName{analysis}}

We first show two lemmas that play a crucial role in our weak tracking analysis.

\begin{lemma}
	Let $x \in \bR^n$ be a fixed vector, and $Z \in \bR^n$ be a random vector with $k$-wise independent entries drawn according to $\pStable$. Then
    \begin{equation*}
		\P(\sum x_i^2 Z_i^2 \geq \lambda^2 \|x\|_p^2) \leq \frac{C}{\lambda^p} + \Oh(k^{-1/p})
        \label{}
    \end{equation*}
    for some universal constant $C$.
    \LemmaName{sos}
\end{lemma}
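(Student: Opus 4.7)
The plan is to reduce the quadratic form $\sum_i x_i^2 Z_i^2 = \|y\|_2^2$, where $y_i \eqdef x_i Z_i$, to a linear form in $Z$ via a symmetrization trick. This lets me invoke exactly the two preliminary results already on the table: \Corollary{large-dot-product} and \Theorem{kwise-pstable}.

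First I would introduce, as a pure analysis artifact (not part of the algorithm's randomness), an auxiliary vector $\sigma \in \{\pm 1\}^n$ of $4$-wise independent Rademacher signs independent of $Z$. Conditional on $Z$, \Corollary{large-dot-product} applied with $v = y$ asserts that $\langle \sigma, y\rangle^2 \geq \tfrac{2}{3}\|y\|_2^2$ with probability at least $\tfrac{1}{27}$ over $\sigma$. Therefore, whenever the event of interest $\{\|y\|_2^2 \geq \lambda^2 \|x\|_p^2\}$ occurs, we also have $|\langle \sigma, y\rangle| \geq \sqrt{2/3}\,\lambda\|x\|_p$ with conditional probability at least $\tfrac{1}{27}$. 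Fubini then gives
\begin{equation*}
    \tfrac{1}{27}\,\P_Z\Bigl(\textstyle\sum_i x_i^2 Z_i^2 \geq \lambda^2 \|x\|_p^2\Bigr) \;\leq\; \P_{\sigma,Z}\bigl(|\langle \sigma, y\rangle| \geq \sqrt{2/3}\,\lambda\|x\|_p\bigr),
\end{equation*}
so it suffices to bound the right-hand side, which is a tail of a linear form in $Z$.

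Next I would condition on $\sigma$ and observe that $\langle \sigma, y\rangle = \sum_i (\sigma_i x_i)\, Z_i$ is a linear combination of the $k$-wise independent $p$-stable variables $Z_i$ with coefficient vector $\sigma \odot x$; since $|\sigma_i| = 1$, this vector has $\ell_p$-norm exactly $\|x\|_p$. Applying \Theorem{kwise-pstable} with the intervals $(\sqrt{2/3}\,\lambda\|x\|_p, +\infty)$ and $(-\infty, -\sqrt{2/3}\,\lambda\|x\|_p)$ (i.e.\ once per half-line to capture the two-sided tail) and summing,
\begin{equation*}
    \P_Z\bigl(|\langle \sigma, y\rangle| \geq \sqrt{2/3}\,\lambda\|x\|_p \,\bigm|\, \sigma\bigr) \;\leq\; \P\bigl(|Z_1| \geq \sqrt{2/3}\,\lambda\bigr) + \Oh(k^{-1/p}).
\end{equation*}
This bound is uniform in $\sigma$, so averaging over $\sigma$ preserves it; the $\Oh(k^{-1/p})$ term is precisely the error appearing in the lemma statement. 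Finally I would invoke \Lemma{p-stable-tails} together with the symmetry of $\pStable$ to conclude $\P(|Z_1| \geq \sqrt{2/3}\,\lambda) \leq C/\lambda^p$ for a universal constant, and collect constants with the $\tfrac{1}{27}$ factor from the Paley--Zygmund step.

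The main obstacle is essentially bookkeeping rather than conceptual: I have to handle the absolute-value tail despite \Theorem{kwise-pstable} being phrased for a single interval (resolved by a two-interval decomposition, losing only a factor of $2$), and I must take care that $\sigma$ is introduced only in the analysis so that the lemma remains a statement about the joint distribution of $Z$ alone. Beyond that, the proof is a direct plug-in of the stated preliminaries.
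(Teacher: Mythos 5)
Your proof is correct and is essentially the paper's argument: both reduce the quadratic event $\{\sum_i x_i^2 Z_i^2 \geq \lambda^2 \|x\|_p^2\}$ to a linear-form tail via Paley--Zygmund over auxiliary Rademachers (\Corollary{large-dot-product}), then finish with \Theorem{kwise-pstable} and \Lemma{p-stable-tails}. The one place you diverge is where the Rademachers come from: you introduce $\sigma$ as fresh randomness independent of $Z$, then condition on $\sigma$ at the end and apply \Theorem{kwise-pstable} to the coefficient vector $\sigma \odot x$ (noting $\|\sigma \odot x\|_p = \|x\|_p$); the paper instead factors $Z_i = |Z_i|\sigma_i$, conditions on the magnitudes $|Z|$, and applies \Theorem{kwise-pstable} directly to the original $x$ after observing that $|Z_i|\sigma_i$ and $Z_i$ are the same random variable. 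The two are interchangeable in the end, but your variant has the small advantage that it never needs the signs of $Z$ to remain $4$-wise independent Rademachers conditionally on the magnitudes --- a property the paper's decomposition implicitly relies on, which holds for the usual construction of $k$-wise independent $p$-stables but is not literally forced by the hypothesis that the $Z_i$ are $k$-wise independent. Your accounting of the two one-sided intervals to handle the absolute-value tail is also exactly what's needed, and matches the paper's implicit handling.
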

\begin{proof}
    Let $E_0$ be the event $\sum x_i^2 Z_i^2 \geq \lambda^2 \|x\|_p^2$. Note that $E_0$ depends only on $|Z_i|$, and does not depend on the signs of the $Z_i$. We write $Z_i = |Z_i| \sigma_i$, where $\sigma_i$ are $k$-wise independent random signs. Conditioning on $|Z_i|$,
    \begin{equation*}
    \E_\sigma\left( \left(\sum x_i |Z_i| \sigma_i\right)^2 \middle| |Z_1|, \ldots |Z_n| \right) = \sum x_i^2 Z_i^2
        \label{}
    \end{equation*}
    and therefore for any $|Z_1|,\ldots,|Z_m|$ for which $E_0$ holds, by \Corollary{large-dot-product}
    \begin{equation*}
        \P_{\sigma}\left(  \left( \sum x_i |Z_i| \sigma_i \right)^2 \geq \frac{2}{3}\lambda^2 \|x\|_p^2 \middle| |Z_1|,\ldots,|Z_m| \right) \geq 
        \P_{\sigma}\left(  \left( \sum x_i |Z_i| \sigma_i \right)^2 \geq \frac{2}{3}\sum x_i^2 Z_i^2 \middle| |Z_1|,\ldots,|Z_m|\right) \geq 
        \frac{1}{27}
        \label{}
    \end{equation*}
    and thus
    \begin{equation*}
        \P_\sigma\left(  \left( \sum x_i |Z_i| \sigma_i \right)^2 \geq \frac{2}{3}\lambda^2 \|x\|_p^2 \middle| |Z_1|, \ldots |Z_n| \right) \geq \frac{\mathbf{1}_{E_0}}{27} ,
    \end{equation*}
    where $\mathbf{1}_{E_0}$ is an indicator random variable for event $E_0$. Integrating over $|Z_i|$, 
    \begin{equation}\label{eqn:e0}
        \P_{\sigma,Z}\left( \left( \sum x_i |Z_i| \sigma_i \right)^2 \geq \frac{2}{3} \lambda^2 \|x\|_p^2 \right) \geq \frac{1}{27} \P_Z(E_0) .
    \end{equation}

     On the other hand $|Z_i| \sigma_i$ has the same distribution as $Z_i$, and moreover
    \begin{align}
\nonumber        \P_Z\left( \left(  \sum x_i Z_i \right)^2 \geq \frac{2}{3} \lambda^2 \|x\|_p^2 \right) & = \P_Z\left( \left| \langle x, Z \rangle \right| \geq \sqrt{\frac{2}{3}} \lambda \|v\|_p \right) \\
\nonumber		& \leq \P_Z\left( \|x\|_p \tilde{Z} \geq \sqrt{\frac{2}{3}} \lambda \|x\|_p\right) + \Oh(k^{-1/p}) \\
		& \leq \frac{C}{\lambda^p} + \Oh(k^{-1/p}) \label{eqn:pstable-decay}
    \end{align}
	where $\tilde{Z}\sim\mathcal{D}_p$. The inequalities are obtained via \Theorem{kwise-pstable} and \Lemma{p-stable-tails}. Combining \eqref{eqn:e0}, \eqref{eqn:pstable-decay} yields
    \begin{equation*}
		\P_Z(E_0) \leq \frac{27C}{\lambda^p} + \Oh(k^{-1/p}) .
    \end{equation*}
\end{proof}

\begin{lemma}
	Let $x^{(1)}, x^{(2)}, \ldots x^{(m)} \in \bR^n$ satisfy $0 \preceq x^{(1)} \preceq x^{(2)} \preceq \ldots \preceq x^{(m)}$. Let $Z \in \bR^n$ have $k$-wise independent entries marginally distributed according to $\pStable$. Then for some $C_p$ depending only on $p$,
    \begin{equation*}
		\P\left(\sup_{k \leq m} |\langle Z, x^{(k)}\rangle| \geq \lambda \|x^{(m)}\|_p\right) \leq C_p \left(\frac{1}{\lambda^{2p/(2+p)}} + k^{-1/p}\right) .
        \label{}
    \end{equation*}
	\LemmaName{tracking}
\end{lemma}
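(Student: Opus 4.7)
The plan is to reduce the $p$-stable supremum to a Rademacher supremum (so that \Theorem{chaining} applies) by splitting each $Z_i$ into its magnitude and sign, and then to control the resulting magnitude-weighted $\ell_2$ norm via \Lemma{sos}. Write $Z_i = |Z_i|\sigma_i$ with $\sigma_i = \mathrm{sign}(Z_i)$. Symmetry of $\pStable$ implies that the joint law of $(Z_i)$ is invariant under independent sign flips, so conditional on the magnitudes $(|Z_i|)_{i\in[n]}$ the signs $(\sigma_i)_{i\in[n]}$ are $k$-wise independent (hence $4$-wise independent, provided $k \ge 4$) uniform $\pm 1$ variables. Define the auxiliary sequence $w^{(t)} \in \bR^n$ by $w^{(t)}_i \eqdef x^{(t)}_i |Z_i|$; since $x^{(t)}$ is coordinatewise non-decreasing in $t$ and $|Z_i|\ge 0$, we have $0 \preceq w^{(1)} \preceq \ldots \preceq w^{(m)}$, and $\langle Z, x^{(t)}\rangle = \langle \sigma, w^{(t)}\rangle$.

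Next I would apply \Theorem{chaining} conditionally on the magnitudes to obtain, for any $\mu > 0$,
\begin{equation*}
\P\!\left(\sup_{t \le m}|\langle Z, x^{(t)}\rangle| > \mu\, \|w^{(m)}\|_2 \;\middle|\; |Z_1|,\ldots,|Z_n|\right) \le \frac{C}{\mu^2},
\end{equation*}
which integrates to the same bound unconditionally. Separately, \Lemma{sos} applied to the fixed vector $x^{(m)}$ controls the residual magnitude: for any $\mu' > 0$,
\begin{equation*}
\P\!\left(\sum (x^{(m)}_i)^2 Z_i^2 \ge \mu'^2 \|x^{(m)}\|_p^2\right) \le \frac{C'}{\mu'^p} + \Oh(k^{-1/p}).
\end{equation*}
A union bound then gives $\P(\sup_t |\langle Z, x^{(t)}\rangle| \ge \mu\mu'\|x^{(m)}\|_p) \le C/\mu^2 + C'/\mu'^p + \Oh(k^{-1/p})$.

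To match the stated exponent I would set $\mu\mu' = \lambda$ and balance the two tail rates by choosing $\mu^2 = \mu'^p$, so that $\mu' = \lambda^{2/(p+2)}$ and $1/\mu^2 = 1/\mu'^p = \lambda^{-2p/(p+2)}$, which yields the claim after absorbing constants into $C_p$. The only mild subtlety (rather than a real obstacle) is confirming that $k$-wise independence of $(Z_i)$ descends to $k$-wise independence of $(\sigma_i)$ conditional on the magnitudes; this follows because $p$-stable symmetry makes $\sigma_i$ independent of $|Z_i|$ coordinatewise, and $k$-wise joint independence restricted to any subset of size at most $k$ then factors into independent magnitude-sign pairs. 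Modulo this verification the proof is essentially bookkeeping: \Theorem{chaining} does the heavy lifting on the supremum over $t$, while \Lemma{sos} handles the $p$-stable tail of $\|w^{(m)}\|_2$, and the exponent $2p/(2+p)$ falls out of the AM-GM-style balance between a Gaussian-like $1/\mu^2$ tail and a $p$-stable $1/\mu'^p$ tail.
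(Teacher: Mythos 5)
Your proof is correct and follows essentially the same route as the paper: decompose $Z_i = |Z_i|\sigma_i$, introduce $w^{(t)}_i = x^{(t)}_i|Z_i|$, apply \Theorem{chaining} conditionally on the magnitudes, control $\|w^{(m)}\|_2$ via \Lemma{sos}, and optimize the two tail exponents. The paper parameterizes with a single $\beta$ (splitting $\P(A) \le \P(B) + \P(A \mid B^c)$ and then setting $\beta = \Theta(\lambda^{2/(2+p)})$), whereas you use a union bound with $\mu,\mu'$ satisfying $\mu\mu' = \lambda$; these are the same decomposition up to notation.
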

\begin{proof}
    Observe that for any $\beta$ we have 
    \begin{align*}
		\P\left(\sup_{k \leq m} |\langle Z, x^{(k)}\rangle| \geq \lambda \|x^{(m)}\|_p\right)  
		& \leq \P\left( \sum_i Z_i^2(x^{(m)})_i^2 \geq \beta^2 \|x^{(m)}\|_p^2\right) \\
        & + \P\left(\sup_{k \leq m} |\langle Z, x^{(k)}\rangle| \geq \lambda \|x^{(m)}\|_p \middle| \sum Z_i^2 (x^{(m)})_i^2 < \beta^2 \|x^{(m)}\|_p^2\right) .
    \end{align*}

    \Lemma{sos} directly implies that 
    \begin{equation}
		\P\left(\sum Z_i^2 (x^{(m)})_i^2 \geq \beta^2 \|x^{(m)}\|_p^2\right) \leq \frac{C}{\beta^p} + \frac{C}{k^{1/p}}.
        \EquationName{first-part}
    \end{equation}

    On the other hand we can write $Z_i = |Z_i| \sigma_i$, where $\sigma_i$ are $k$-wise independent Rademacher random variables, independent from $|Z_i|$. Let us define $w^{(k)} \in \mathbb{R}^n$ for $k\in [m]$ to be the vector with coordinates $(w^{(k)})_i := (x^{(k)})_i |Z_i|$, so that $\inprod{x^{(k)}, Z} = \inprod{w^{(k)}, \sigma}$, and in particular 
	\begin{equation*}
		\sup_{k\leq m} \left|\inprod{Z, x^{(i)}}\right| = \sup_{k \leq m} \left|\inprod{\sigma, w^{(i)}}\right|.
		\label{}
	\end{equation*}
	
	Now, if we condition on $|Z_1|, \ldots |Z_n|$, then the sequence $w^{(1)}, \ldots w^{(k)}$ of vectors satisfies the assumptions of \Theorem{chaining}, and we can conclude that
	\begin{equation*}
        \P\left(\sup_{k \leq m} \left|\inprod{\sigma, w^{(k)}}\right| > \frac{\lambda}{\beta} \|w^{(m)}\|_2\right) \leq \frac{C\beta^2}{\lambda^2} .
	\end{equation*}
    Moreover if $|Z_i|$ are such that $\sum Z_i^2 (x^{(m)})_i^2 \leq \beta^2 \|x^{(m)}\|_p^2$, or equivalently $\|w^{(m)}\|_2^2 \leq \beta^2 \|x^{(m)}\|_p^2$, we have
	\begin{equation*}
		\P\left(\sup_{k \leq m} \left|\inprod{\sigma, w^{(k)}}\right| > \lambda \|x^{(m)}\|_p\right) \leq \frac{C\beta^2}{\lambda^2} ,
	\end{equation*}
	which implies
    \begin{equation*}
        \P\left(\sup_{k \leq m} |\langle Z, x^{(k)}\rangle| \geq \lambda\|x^{(m)}\|_p \,\middle|\, \sum (Z_i x^{(m)}_i)^2 < \beta \|x^{(m)}\|_p^2\right)  \leq \frac{C\beta^2}{\lambda^2}.
    \end{equation*}

    This together with \Equation{first-part} yields 
	\begin{equation*}
		\P\left(\sup_{k \leq m} |\langle Z, x^{(k)}\rangle| \geq \lambda \|x^{(m)}\|_p\right) \leq \frac{1}{\beta^p} + \frac{C \beta^2}{\lambda^2} + \frac{C}{k^{1/p}}
		\label{}
	\end{equation*}
	We can take $\beta := \Theta(\lambda^{\frac{2}{2+p}})$, to have $\frac{1}{\beta^p} + \frac{C\beta^2}{\lambda^2} = \Oh(\lambda^{-\frac{2p}{2+p}})$. 
\end{proof}

\subsection{Weak tracking of $\|x\|_p$}\SectionName{weak-tracking}

In this section we upper bound the number of rows needed in Indyk's $p$-stable sketch with boundedly independent entries to achieve weak tracking.

\begin{lemma}
    \LemmaName{weak-tracking}
    Let $x^{(1)}, \ldots x^{(m)} \in \bR^{n}$ be any sequence satisfying $0 \preceq x^{(1)} \preceq x^{(2)} \preceq \ldots \preceq x^{(m)}$.
	Take $\Pi \in \bR^{d \times n}$ to be a random matrix with entries drawn according to $\pStable$, and such that the rows are $r$-wise independent, and all entries within a row are $s$-wise independent. 
	
	For every $k \in [m]$, define $\alpha_k$ to be $\mathop{median}\left( |(\Pi x^{(k)})_1|, \ldots, |(\Pi x^{(k)})_d|\right)$.
	If $d = \Omega(\varepsilon^{-2} (\log \frac{1}{\varepsilon} + \log \frac{1}{\delta})), r = \Omega(\log\frac{1}{\varepsilon} + \log\frac{1}{\delta})$ and $s = \Omega(\varepsilon^{-p})$, then with probability at least $1 - \delta$ we have
    \begin{equation*}
        \forall k\in[m],\ \|x^{(k)}\|_p - \varepsilon \|x^{(m)}\|_p \leq \alpha_k \leq \|x^{(k)}\|_p + \varepsilon \|x^{(m)}\|_p
        \label{}
    \end{equation*}
\end{lemma}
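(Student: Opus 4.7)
The plan is to discretize time by partitioning $[0,m]$ into $N = \mathrm{poly}(1/\varepsilon)$ intervals with endpoints $0 = t_0 < t_1 < \ldots < t_N = m$ chosen so that $\|x^{(t_{j+1})} - x^{(t_j)}\|_p \leq \gamma\|x^{(m)}\|_p$ for a parameter $\gamma = \varepsilon^{\Theta(1)}$ to be tuned; such a decomposition exists with $N = \mathrm{poly}(1/\gamma)$ by greedily extending each interval as long as possible and using monotonicity of $(x^{(t)})$ to bound the total ``movement.'' On this grid I will establish two complementary events: (i) at every endpoint $t_j$ the median $\alpha_{t_j}$ is a $(1\pm\tfrac{\varepsilon}{4})$-approximation of $\|x^{(t_j)}\|_p$, and (ii) in every interval at most $\tfrac{\varepsilon}{4}d$ of the rows see their counter $(\Pi x^{(t)})_i$ drift by more than $\tfrac{\varepsilon}{10}\|x^{(m)}\|_p$ from its value at $t_j$. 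A union bound over the $N$ intervals then yields the lemma.

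For (i), fix $t_j$ and a row $i$. By \Theorem{kwise-pstable} applied within the row with $s = \Omega(\varepsilon^{-p})$, the CDF of $|(\Pi x^{(t_j)})_i|/\|x^{(t_j)}\|_p$ agrees with that of $|Z|$ for $Z \sim \pStable$ up to additive $\Oh(\varepsilon)$. Since the density of $|Z|$ at its median $1$ is a positive constant, the probability that $|(\Pi x^{(t_j)})_i| \leq (1-\tfrac{\varepsilon}{4})\|x^{(t_j)}\|_p$ is at most $\tfrac12 - \Omega(\varepsilon)$, and symmetrically for the upper tail. Applying \Theorem{chernoff} to the resulting $r$-wise independent Bernoulli indicators across the $d$ rows bounds the probability that the median escapes the $(1\pm\tfrac{\varepsilon}{4})\|x^{(t_j)}\|_p$ window by $\exp(-\Omega(\varepsilon^2 d)) + \exp(-\Omega(r))$, which is at most $\delta/(2N)$ under the stated bounds on $d$ and $r$.

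For (ii), apply \Lemma{tracking} to the shifted monotone sequence $y^{(a)} := x^{(t_j+a)}-x^{(t_j)}$ inside the $j$-th interval, for which $\|y^{(a)}\|_p \leq \gamma\|x^{(m)}\|_p$. With $\lambda := \varepsilon/(10\gamma)$, each row is ``unstable'' (its counter shifts by more than $\tfrac{\varepsilon}{10}\|x^{(m)}\|_p$ somewhere in the interval) with probability $\Oh(\lambda^{-2p/(2+p)} + s^{-1/p})$. Picking $\gamma$ a sufficiently small polynomial in $\varepsilon$ (concretely $\gamma \lesssim \varepsilon^{2+1/p}$) pushes this below $\varepsilon/8$, and a second application of \Theorem{chernoff} bounds the fraction of unstable rows by $\varepsilon/4$ except with probability $\exp(-\Omega(\varepsilon d)) + \exp(-\Omega(r)) \leq \delta/(2N)$. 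Conditioning on (i) and (ii) simultaneously, for every $t \in [t_j, t_{j+1}]$ the reverse triangle inequality $||(\Pi x^{(t)})_i| - |(\Pi x^{(t_j)})_i|| \leq |(\Pi(x^{(t)}-x^{(t_j)}))_i|$ together with $|\|x^{(t)}\|_p - \|x^{(t_j)}\|_p| \leq \gamma\|x^{(m)}\|_p$ places strictly more than $d/2$ counters inside $[\|x^{(t)}\|_p - \varepsilon\|x^{(m)}\|_p,\ \|x^{(t)}\|_p + \varepsilon\|x^{(m)}\|_p]$, pinning $\alpha_t$ to the same window.

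The main obstacle is jointly tuning $\gamma$ against the per-row failure probability $\Oh(\lambda^{-2p/(2+p)})$ coming from \Lemma{tracking}: $\gamma$ must be a small enough polynomial in $\varepsilon$ to drive each row's instability probability below $\varepsilon$, yet $N = \mathrm{poly}(1/\gamma)$ must not be so large that $\log N$ exceeds the $\log(1/\varepsilon)$ slack available in $d$ and $r$. The $\tfrac{2p}{2+p}$ exponent in \Lemma{tracking} is precisely what makes this tradeoff close, and is the direct source of the additional $\log(1/\varepsilon)$ factor in $d$. A minor subtlety for $p<1$ is that $\|\cdot\|_p$ is only a quasi-norm, so the norm-stability bound $|\|x^{(t)}\|_p - \|x^{(t_j)}\|_p| \leq \gamma\|x^{(m)}\|_p$ has to be derived from $\|a+b\|_p^p \leq \|a\|_p^p + \|b\|_p^p$ at the cost of an extra constant, but the qualitative conclusion is unchanged.
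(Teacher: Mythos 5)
Your proposal is correct and follows the paper's proof essentially verbatim: the same greedy decomposition into $\mathrm{poly}_p(1/\varepsilon)$ intervals of bounded $\ell_p$-mass, the same control of the medians at interval endpoints via \Theorem{kwise-pstable} together with the $k$-wise Chernoff bound (\Theorem{chernoff}), the same per-row drift bound inside each interval via \Lemma{tracking}, and the same triangle-inequality conclusion. The only difference is your slightly more conservative threshold $\gamma \lesssim \varepsilon^{2+1/p}$ versus the paper's $\gamma = \varepsilon^{4/p}$; both satisfy the required $\gamma \lesssim \varepsilon^{3/2 + 1/p}$ and yield $\mathrm{poly}_p(1/\varepsilon)$ intervals, so nothing material changes.
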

\begin{proof}
    Consider a sequence of indices $1 < t_1 < t_2 < \ldots  < t_{q+1} = m$, constructed inductively in the following way.  We take $t_1$ to be the smallest index with $\|x^{(t_1)}\|_p \geq \varepsilon^{4/p}\|x^{(m)}\|_p$. Given $t_k$, we take $t_{k+1}$ to be the smallest index such that $\|x^{(t_{k+1})} - x^{(t_{k})}\|_p \geq \varepsilon^{4/p}\|x^{(m)}\|_p$ if there exists one, and $t_{k+1}=m$ otherwise.

	Observe that $q \leq \eps^{-8/p}$. Indeed, for $p \geq 1$ we have
    \begin{equation*}
		\|x^{(m)}\|_p^p = \|x^{(t_1)} + \sum_{1 \leq i < q} (x^{(t_{i+1})} - x^{(t_i)})\|_p^p \geq \|x^{(t_1)}\|_p^p + \sum_{1 \leq i <  q} \|x^{(t_{i+1})} - x^{(t_i)}\|_p^p \geq q \varepsilon^{4} \|x^{(m)}\|_p^p
    \end{equation*}
    where the inequality $\|x^{(t_1)} + \sum_{i\geq 1} (x^{(t_{i+1})} - x^{(t_i)})\|_p^p \geq \|x^{(t_1)}\|_p^p + \sum \|x^{(t_{i+1})} - x^{(t_i)}\|_p^p$ holds because all vectors $x^{(1)}$ and $x^{(t_{i+1})} - x^{(t_i)}$ for every $i$ have non-negative entries --- we can consider each coordinate separately, and use the fact that for $p \geq 1$ and nonnegative numbers $a_i$ we have $(\sum a_i)^p \geq \sum a_i^p$ --- or equivalently, $\|a\|_1^p \geq \|a\|_p^p$. After rearranging this yields $q \leq \varepsilon^{-4} \leq \varepsilon^{-8/p}$.
	
	Similarly, for $p \leq 1$, we have that for non-negative numbers $a_i$, $(\sum_{i \leq q} a_i)^p \geq q^{p-1} \sum a_i^p$ (this is true because for fixed $\sum a_i$, the sum $\sum a_i^p$ is maximized when all $a_i$ are equal), and therefore
	\begin{equation*}
        \|x^{(m)}\|_p^p = \|x^{(t_1)} + \sum_{1 \leq i < q} (x^{(t_{i+1})} - x^{(t_i)})\|_p^p \geq q^{p-1} \left(\|x^{(t_1)}\|_p^p + \sum_{1 \leq i <  q} \|x^{(t_{i+1})} - x^{(t_i)}\|_p^p\right) \geq q^p \varepsilon^{4} \|x^{(m)}\|_p^p
	\end{equation*}
	which implies $q \leq \varepsilon^{-4/p}$.

	For $j \in [m]$, let us define
	\begin{align*}
		l_j & := \#\{ i : |\inprod{\pi_i, x^{(j)}}| < (1-\varepsilon) \|x^{(j)}\|_p\} \\ 		u_j & := \#\{ i : |\inprod{\pi_i, x^{(j)}}| > (1+\varepsilon) \|x^{(j)}\|_p\} .
	\end{align*}
	
	Let $\tilde{\pi}_i$ be a vector of i.i.d. random variables drawn according to $\pStable$. We know that $\inprod{\tilde{\pi}_i, x^{(j)}} \sim \|x^{(j)}\|_p \pStable$. Hence $\P(|\inprod{\tilde{\pi}_i, x^{(j)}}| > \|x^{(j)}\|_p) = \frac{1}{2}$, and $\P(|\inprod{\tilde{\pi}_i, x^{(j)}}| > (1 + \varepsilon) \|x^{(j)}\|_p) \leq \frac{1}{2} - 2 C \varepsilon$ for some universal constant $C$. Similarly $\P(|\inprod{\tilde{\pi}_i, x^{(j)}}| < (1-\varepsilon)\|x^{(j)}\|_p) \leq \frac{1}{2} - 2 C \varepsilon$. 

	Entries of $\pi_i$ are $s$-wise independent, for $s \geq C_2 \varepsilon^{-p}$ with some large constant $C_2$ depending on $C$. Thus by \Theorem{kwise-pstable}, $\P(|\inprod{\pi_i, x^{(j)}}| < (1-\varepsilon)\|x^{(j)}\|_p) \leq \P(|\inprod{\tilde{\pi}_i, x^{(j)}}| < (1-\varepsilon)\|x^{(j)}\|_p) + C \varepsilon \leq \frac{1}{2} - C \varepsilon$, and analogously for $\P(|\inprod{\pi_i, x^{(j)}}| > (1 + \varepsilon)\|x^{(j)}\|_p) < \frac{1}{2} - C \varepsilon$.

	Hence 
	\begin{align*}
		\E l_j &\leq d\left(\frac{1}{2} - C \varepsilon\right)\\
		\E u_j &\leq d\left(\frac{1}{2} - C \varepsilon\right).
	\end{align*}
	
	For $j \in [q]$, let $S_j$ be the event
	\begin{equation*}
		\left\{ l_{t_j} \leq \frac{d}{2} - \frac{C d}{2} \varepsilon \right\} \land \left\{u_{t_j} \leq \frac{d}{2} - \frac{C d}{2} \varepsilon\right\}
	\end{equation*}
	
	Note that for fixed $j$ and varying $i$, indicator random variables for the events ``$|\inprod{\pi_i, x^{(j)}}| < (1-\varepsilon) \|x^{(j)}\|_p$''  are $r$-wise independent. Thus by \Theorem{chernoff}, $\P(S_j) \geq 1 - C' \exp( -\Omega(d\varepsilon^2)) - \exp(-\Omega(r))$. Taking $d = \Omega_p(\varepsilon^{-2} (\log{\frac{1}{\varepsilon}} + \log \frac{1}{\delta}))$ and $r = \Omega_p(\log \frac{1}{\varepsilon\delta})$ we can get $\P(S_j) \geq 1 - \frac{\delta \varepsilon^{8/p}}{2}$, and hence by a union bound all $S_j$ hold simultaneously except with probability at most $\frac{\delta}{2}$ since the number of events $S_j$ is $q \leq \varepsilon^{-8/p}$.

	For $i \in [d]$ and $j \in [q]$, let $E_{i,j}$ be the event
	\begin{equation*}
		\exists t \in [t_j, t_{j+1} - 1],\ |\inprod{x^{(t)} - x^{(t_j)}, \pi_i}| > \varepsilon \|x^{(m)}\|_p.
	\end{equation*}

    By construction of the sequence $t_j$, all $x^{(t)} - x^{(t_j)}$ above have $\ell_p$ norm at most $\varepsilon^4\|x^{(m)}\|_p$, we can invoke \Lemma{tracking} to deduce that $\P(E_{ij}) \leq C_3 \left(\frac{\varepsilon^{4/p}}{\varepsilon}\right)^{\frac{2p}{2+p}} + C_3 s^{-1/p} \leq C_3 \varepsilon + C_3 s^{-1/p}$. Again if we pick $s \geq C_4 \varepsilon^{-p}$ for sufficiently large $C_4$ and small enough $\varepsilon$ we have $\P(E_{ij}) \leq \frac{C}{4} \varepsilon$. Therefore for any fixed $j$, we have
	\begin{equation*}
		\E \sum_{i} \mathbf{1}_{E_{ij}} \leq \frac{C}{4} d\varepsilon
	\end{equation*}
	And finally again by \Theorem{chernoff}, for each $j$
	\begin{equation*}
		\P(\sum_i \mathbf{1}_{E_{ij}} \geq \frac{C}{2} d\varepsilon) \lesssim \exp(-C' d \varepsilon) + \exp(-C' r) 
	\end{equation*}

	We have $d \geq C_3 \varepsilon^{-2} \log \frac{1}{\delta\varepsilon}$, and $q \leq \varepsilon^{-8/p}$, hence for sufficiently small $\varepsilon$, we have $\exp(-C' d \varepsilon) \leq \frac{\delta}{2q}$. On the other hand if $r = \Omega_p(\log \frac{1}{\delta\varepsilon})$ is sufficiently large, we have $\exp(-C'r) \leq \frac{\delta}{2q}$. We invoke the union bound over all $j$ to deduce that with probability at least $1 - \frac{\delta}{2}$ the following event $V$ holds:
	\begin{equation*}
		\forall j,\ \sum_i \mathbf{1}_{E_{ij}} \leq \frac{C}{2} d \varepsilon .
	\end{equation*}

	We know that with probability at least $1 - \delta$ simultaneously $V$ and all the events $S_j$ hold. We will show now that, when these events all hold, then $\forall k\ \|x^{(k)}\|_p - K \varepsilon \|x^{(m)}\|_p \leq \alpha_k \leq \|x^{(k)}\|_p + K \varepsilon \|x^{(m)}\|_p$ for some universal constant $K$. Indeed, consider some $k$, and let us assume that $t_j \leq k \leq t_{j+1}$. With event $S_j$ satisfied, we know that $\#\{ i : |\inprod{\pi_i, x^{(t_{j})}}| \leq \|x^{(t_{j)}}\|_p + \varepsilon \|x^{(m)}\|_p\} \geq d\left( \frac{1}{2} + \frac{C\varepsilon}{2} \right)$, and with event $V$ satisfied, we know that for all but $\frac{C\varepsilon}{2}d$ of indices $i$ we have $|\inprod{\pi_i, x^{(k)} - x^{(t_{j})}}| \leq \varepsilon \|x^{(m)}\|$.

    By the triangle inequality $|\inprod{\pi_i, x^{(k)}}| \leq |\inprod{\pi_i, x^{(t_j)}}| + |\inprod{\pi_i, x^{(k)} - x^{(t_j)}}|$, yielding
	\begin{equation*}
		\#\{ i : |\inprod{\pi_i, v_k}| \leq \|v_{t_{j}}\|_p + 2 \varepsilon \|v_m\|_p\} \geq \frac{d}{2}.
		\label{}
	\end{equation*}
	With similar reasoning we can deduce that
	\begin{equation*}
		\#\{ i : |\inprod{\pi_i, x^{(k)}}| \geq \|x^{(t_{j})}\|_p - 2 \varepsilon \|x^{(m)}\|_p\} \geq \frac{d}{2} ,
		\label{}
	\end{equation*}
	which implies the median of $|\inprod{\pi_i, x^{(k)}}|$ over $i \in [d]$ is in the range $\|x^{(t_j)}\|_p \pm 2 \varepsilon \|x^{(m)}\|_p$. In other words
    \begin{equation*}
        \|x^{(t_{j})}\|_p - 2 \varepsilon \|x^{(m)}\|_p \leq \alpha_k \leq \|x^{(t_{i})}\|_p + 2 \varepsilon \|x^{(m)}\|_p.
    \end{equation*}

	Finally we also have $\left|\|x^{(k)}\|_p - \|x^{(t_{j})}\|_p \right| \leq \varepsilon \|x^{(m)}\|_p$ by construction of the sequence $\{t_j\}_{j=1}^q$, so the claim follows up to rescaling $\varepsilon$ by a constant factor.
\end{proof}

\begin{lemma}
	\LemmaName{weak-tracking-space}
	The above algorithm can be implemented using $\Oh(\eps^{-2}\log(1/(\eps\delta))\log m)$ bits of memory to store fixed precision approximations of all counters $(\Pi x^{(k)})_i$, and $\Oh(\eps^{-p}\log(1/(\eps\delta))\log(nm))$ bits to store $\Pi$.
\end{lemma}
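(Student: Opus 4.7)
My plan is to handle the two storage bounds separately via a routine bookkeeping argument, once the right discretization is set up.

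First, to obtain the $\Oh(\eps^{-p}\log(1/(\eps\delta))\log(nm))$ bound for $\Pi$, I would generate it via a two-level hashing construction. Each row of $\Pi$ is produced by an $s$-wise independent hash family $h:[n]\to\mathcal{U}$, where $\mathcal{U}$ is a fixed discretization of the $\pStable$ distribution (truncated at magnitude $\mathop{poly}(nm)$ and discretized to precision $1/\mathop{poly}(nm)$) sampled via the inverse-CDF trick. Specifying such an $h$ requires a seed of $L=\Oh(s\log(nm))$ bits. To realize the $r$-wise independence of the $d$ row seeds, I would apply an $r$-wise independent hash $[d]\to\{0,1\}^L$, taking $\Oh(rL)$ bits. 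Substituting $r=\Omega(\log(1/(\eps\delta)))$ and $s=\Omega(\eps^{-p})$ yields total seed length $\Oh(\eps^{-p}\log(1/(\eps\delta))\log(nm))$ bits, giving the second bound.

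For the counter storage, I would bound the range of each counter and argue that fixed precision suffices. After truncating each entry of $\Pi$ at magnitude $M=\mathop{poly}(nm)$, the probability that any of the $nd$ entries is ever truncated is at most $nd\cdot C_p/M^p = o(\delta)$ by \Lemma{p-stable-tails}; conditioning on this event, every counter satisfies $|(\Pi x^{(k)})_i|\leq M\|x^{(k)}\|_1\leq \mathop{poly}(nm)$. After discretizing each entry to precision $\rho=1/\mathop{poly}(nm)$, every counter is an integer multiple of $\rho$ representable in $\Oh(\log(M/\rho))=\Oh(\log m)$ bits (absorbing $\log n$ into $\log m$, since we may assume w.l.o.g.\ $n\leq m$, else one can trivially store $x$ exactly using $n\log m = \Oh(m\log m)$ bits). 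Multiplying by the $d=\Oh(\eps^{-2}\log(1/(\eps\delta)))$ counters yields the first claimed bound.

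The one subtle step is verifying that truncation and discretization do not break the correctness established in \Lemma{weak-tracking}. Truncation is absorbed into the failure probability by the $o(\delta)$ union bound above. For discretization, the accumulated rounding error on each counter across the $m$ updates is at most $m\rho \ll \eps\|x^{(m)}\|_p$, well below the $\eps\|x^{(m)}\|_p$ additive slack in weak tracking; moreover the $k$-wise independence approximation of \Theorem{kwise-pstable} is insensitive to the bounded-precision sampling, since the construction of \cite{KaneNW10b} was designed to work under exactly this kind of discretization. Thus the guarantee of \Lemma{weak-tracking} transfers to the bounded-precision implementation up to a constant rescaling of $\eps$ and an additive $o(\delta)$ loss in failure probability, which is the main (and essentially the only) thing to check carefully.
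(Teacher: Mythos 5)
Your proposal is correct and follows essentially the same route as the paper's proof: a two-level $r$-wise-over-$s$-wise independent construction for the matrix seed (matching the paper's $\Oh(rs(\log n+\tau)+r\log d)$ accounting via \cite[Corollary 3.34]{Vadhan12}), discretization of the $p$-stable entries to $1/\mathop{poly}(nm)$ precision (the paper takes $\gamma=\Theta(\eps m^{-1})$, which is the same up to the usual $\eps>1/\mathop{poly}(m)$ assumption), and $\Oh(\log m)$ bits per counter. The only presentational difference is that you make the truncation and accumulated-rounding-error arguments explicit, whereas the paper delegates them to \cite[A.6]{KaneNW10b}; both handle the same issues and arrive at the same bounds.
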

\begin{proof}
	Consider a sketch matrix $\Pi$ as in \Lemma{weak-tracking} --- i.e. $\Pi \in \bR^{d\times n}$ with random $\pStable$ entries, such that all rows are $r$-wise independent and all entries within a row are $s$-wise independent. Moreover let us pick some $\gamma = \Theta(\varepsilon m^{-1})$ and consider discretization $\tilde{\Pi}$ of $\Pi$, namely each entry $\tilde{\Pi}_{ij}$ is equal to $\Pi_{ij}$ rounded to the nearest integer multiple of $\gamma$. The analysis identical to the one in \cite[A.6]{KaneNW10b} shows that this discretization have no significant effect on the accuracy of the algorithm, and moreover that one can sample from a nearby distribution using only $\tau = \Oh(\log m\varepsilon^{-1})$ uniformly random bits. Therefore we can store such a matrix succinctly using $\Oh\left(r s (\log n + \tau) + r \log d\right)$ bits of memory, by storing a seed for a random $r$-wise independent hash function $h: [d] \to \{0,1\}^{\Oh(s (\log n + \tau))}$ and interpreting each $h(i)$ as a seed for an $s$-wise independent hash function describing the $i$-th row of $\tilde{\Pi}$ \cite[Corollary 3.34]{Vadhan12}. Hence the total space complexity of storing the sketch matrix $\tilde{\Pi}$ in a succinct manner is $\Oh\left( \frac{\log \delta^{-1} + \log \varepsilon^{-1}}{\varepsilon^p} ( \log n + \log m)\right)$ bits.
	
	Additionally we have to store the sketch of the current frequency vector itself, i.e. for all $i \in [d]$ we need to store $\inprod{\tilde{\pi}_i, x^{(k)}}$; for every such counter we need $\Oh(\log m \varepsilon^{-1}) = \Oh(\log m)$ bits, and there are $d = \Oh\left(\frac{\log \varepsilon^{-1} + \log \delta{-1}}{\varepsilon^{-2}}\right)$ counters.
\end{proof}

We thus have the following main theorem of this section.

\begin{theorem}
	\TheoremName{weak-tracking-alg}
	For any $p\in(0,2]$ there is an insertion-only streaming algorithm that provides the weak tracking guarantees for $f(x) = \|x\|_p$ with probability $1 - \delta$ using $\Oh\left(\frac{\log m + \log n}{\varepsilon^2} (\log \varepsilon^{-1} + \log \delta^{-1})\right)$ bits of memory.
\end{theorem}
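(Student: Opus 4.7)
The plan is to simply combine \Lemma{weak-tracking} and \Lemma{weak-tracking-space}, with the key observation being that for $p \in (0, 2]$ we have $\varepsilon^{-p} \le \varepsilon^{-2}$, so the space for storing the sketch matrix $\tilde{\Pi}$ is dominated by (or at worst comparable to) the space for storing the counters.

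Concretely, first I would instantiate Indyk's $p$-stable sketch as in \Lemma{weak-tracking} with $d = \Theta(\varepsilon^{-2}(\log\varepsilon^{-1} + \log\delta^{-1}))$ rows, row-independence $r = \Theta(\log\varepsilon^{-1} + \log\delta^{-1})$, and intra-row independence $s = \Theta(\varepsilon^{-p})$. The estimator at time $t$ is $\alpha_t = \mathop{median}_{i \in [d]} |\inprod{\pi_i, x^{(t)}}|$, and \Lemma{weak-tracking} directly gives that with probability at least $1 - \delta$, the estimate $\alpha_t$ is within $\varepsilon \|x^{(m)}\|_p$ of $\|x^{(t)}\|_p$ for all $t \in [m]$ simultaneously. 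Since the stream is insertion-only and $\|\cdot\|_p$ is monotone along $x^{(1)} \preceq \ldots \preceq x^{(m)}$, we have $\sup_{t' \le m} \|x^{(t')}\|_p = \|x^{(m)}\|_p$, so this is precisely the weak tracking guarantee.

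Next I would apply \Lemma{weak-tracking-space} with exactly these parameters. That lemma supplies two contributions: $\Oh(\varepsilon^{-2}\log(1/(\varepsilon\delta)) \log m)$ bits for the $d$ fixed-precision counters $(\tilde{\Pi} x^{(t)})_i$, and $\Oh(\varepsilon^{-p}\log(1/(\varepsilon\delta)) \log(nm))$ bits for succinctly representing $\tilde{\Pi}$ via nested $r$-wise and $s$-wise independent hash functions. Using $p \le 2$, the matrix term is bounded by
\begin{equation*}
\Oh\!\left(\varepsilon^{-2}(\log\varepsilon^{-1} + \log\delta^{-1})(\log n + \log m)\right),
\end{equation*}
and summing the two contributions gives the claimed bound $\Oh\!\left(\frac{\log m + \log n}{\varepsilon^2}(\log\varepsilon^{-1} + \log\delta^{-1})\right)$.

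There is essentially no technical obstacle at this stage; the only small point of care is to verify that the parameters $d, r, s$ chosen to satisfy the hypotheses of \Lemma{weak-tracking} are consistent with the parameter regime assumed in \Lemma{weak-tracking-space}, and that the discretization granularity $\gamma = \Theta(\varepsilon/m)$ introduces only a negligible perturbation to the correctness statement (as already justified in the proof of \Lemma{weak-tracking-space} by appealing to the corresponding discretization analysis of \cite{KaneNW10b}). Once this is noted, the theorem follows immediately.
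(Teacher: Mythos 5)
Your argument is correct and is exactly the (omitted) intended proof: the theorem is just \Lemma{weak-tracking} for correctness plus \Lemma{weak-tracking-space} for the bit count, with the observation that $\varepsilon^{-p}\le\varepsilon^{-2}$ for $p\le 2$ so both contributions are absorbed into the stated bound. Nothing to add.
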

\qed

\subsection{Strong tracking of $\|x\|_p$}\SectionName{strong-tracking}

In this section we discuss achieving a strong tracking guarantee. The same argument for $\ell_2$-tracking appeared in \cite{BravermanCINWW17}. The reduction is in fact general, and shows that for any monotone function $f$ the strong tracking problem for $f$ reduces to the weak tracking version of the same problem with smaller failure probability. 

\begin{lemma}
    \LemmaName{strong-tracking-reduction}
	Let $f : \bR^n \to \bR_+$ be any monotone functon of $\bR^n$ (i.e. $x \preceq y \implies f(x) \leq f(y)$),  such that $\min_i f(e_i) = 1$ (where $e_i$ are standard basis vectors). Let $\mathcal{A}$ be an insertion-only streaming algorithm satisfying weak tracking for any sequence of updates with probability $1 - \delta$ and accuracy $\varepsilon$. Then for a sequence of frequency vectors $0 \preceq x^{(1)} \preceq \ldots \preceq x^{(m)}$
	algorithm $\mathcal{A}$ satisfies strong tracking with probability $1 - \delta \log f(x^{(m)})$ and accuracy $2\varepsilon$.
\end{lemma}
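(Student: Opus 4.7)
The plan is to follow the doubling-intervals reduction mentioned in the overview, generalized to any monotone $f$. I would first set up checkpoints: let $t_0 = 1$ and inductively define $t_j$ to be the smallest index with $t_j > t_{j-1}$ and $f(x^{(t_j)}) \geq 2 f(x^{(t_{j-1})})$, stopping (with sentinel $t_q = m+1$) when no such index exists. Since $\min_i f(e_i) = 1$ and $f$ is monotone, $f(x^{(t_0)}) \geq 1$, so the doubling argument gives $f(x^{(t_{q-1})}) \geq 2^{q-1}$; combined with $f(x^{(t_{q-1})}) \leq f(x^{(m)})$ this yields $q \leq \log_2 f(x^{(m)}) + 1$.

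Next, for each $j \in [q]$, I would reinterpret $x^{(1)}, \ldots, x^{(t_j - 1)}$ as a standalone insertion-only stream; since $\mathcal{A}$'s output at time $s$ depends only on the first $s$ updates, the estimates $\mathcal{A}$ produces on this prefix are exactly $\tilde{f}_1, \ldots, \tilde{f}_{t_j - 1}$. The weak tracking hypothesis then guarantees, except with probability $\delta$, that $|\tilde{f}_s - f(x^{(s)})| \leq \varepsilon \sup_{s' \leq t_j - 1} f(x^{(s')}) = \varepsilon f(x^{(t_j - 1)})$ simultaneously for all $s \leq t_j - 1$, using monotonicity for the equality. A union bound over the $q$ prefixes then shows that all these events hold at once except with probability $q\delta \leq \delta(\log_2 f(x^{(m)}) + 1)$. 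Conditioned on this, for any $t \in [m]$ I pick the unique $j$ with $t \in [t_{j-1}, t_j - 1]$ and combine the minimality of $t_j$ (giving $f(x^{(t_j - 1)}) < 2 f(x^{(t_{j-1})})$) with monotonicity ($f(x^{(t_{j-1})}) \leq f(x^{(t)})$) to get
\begin{equation*}
|\tilde{f}_t - f(x^{(t)})| \leq \varepsilon f(x^{(t_j - 1)}) < 2 \varepsilon f(x^{(t_{j-1})}) \leq 2\varepsilon f(x^{(t)}),
\end{equation*}
which is strong tracking with accuracy $2\varepsilon$.

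The main subtlety lies in setting up the intervals correctly: the index $t_j$ at which $f$ jumps by a factor of two must land in the \emph{next} interval rather than the current one, since otherwise the bound $f(x^{(t_j - 1)}) < 2 f(x^{(t_{j-1})})$ could fail and the factor-of-two accuracy loss would not suffice. Apart from this indexing care, the argument is routine --- the hypothesis $\min_i f(e_i) = 1$ serves only to keep $f(x^{(t_0)})$ bounded below by $1$ so that $\log_2 f(x^{(m)})$ is a valid upper bound on the number of doublings, and the ability to apply weak tracking to arbitrary prefixes follows directly from the assumption that the guarantee holds for any insertion-only stream.
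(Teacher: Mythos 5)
Your proposal follows the same doubling-checkpoint reduction as the paper: partition $[m]$ into $O(\log f(x^{(m)}))$ prefixes on which $f$ roughly doubles, invoke weak tracking on each prefix, and union bound. The one substantive difference is in how the prefix endpoints are chosen. The paper applies weak tracking to prefixes ending \emph{at} the doubling index $t_j$ and then asserts $f(x^{(t_j)}) \leq 2 f(x^{(i)})$ for $i$ in the preceding interval; this is fine for $f = \|\cdot\|_p$ on insertion-only streams (where $f$ grows by at most $1$ per step), but for a general monotone $f$ the value $f(x^{(t_j)})$ could overshoot. You end each prefix at $t_j - 1$ instead, so the bound $f(x^{(t_j-1)}) < 2 f(x^{(t_{j-1})})$ is guaranteed by minimality of $t_j$ regardless of how fast $f$ grows --- a small but genuine tightening that makes the stated generality of the lemma (arbitrary monotone $f$) go through cleanly. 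The resulting constant in the failure probability ($q \leq \log_2 f(x^{(m)}) + 1$ rather than $\log f(x^{(m)})$) is immaterial.
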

\begin{proof}
    Define $t_1 < t_2 < \cdots < t_q$ so that $t_i$ is the smallest index in $[m]$ larger than $t_{i-1}$ with $f(x^{(t_i)}) \geq 2^i$ (if no such index exists, define $q = i$ and $t_q = m$). Note that $q \leq \log f(x^{(m)})$.

    The algorithm will fail with probability at most $\delta$ to satisfy the conclusion of \Theorem{weak-tracking-alg} for a particular sequence of vectors $x^{(1)}, x^{(2)}, \ldots x^{(t_j)}$. That is, for every $j$, with probability $1 - \delta$, we have that
    \begin{equation*}
        \forall {i \leq t_j},\ f(x^{(i)}) - \varepsilon f(x^{(t_j)}) \leq \tilde{f}^i \leq f(x^{(i)}) + \varepsilon f(x^{(t_j)}) ,
    \end{equation*}
where $\tilde{f}^t$ is the estimate output by the algorithm at time $t$.

    We can union bound over all $j \in [q]$ to deduce that except with probability $q \delta \leq \delta \log f(x^{(m)})$, 
    \begin{equation*}
        \forall {i \leq t_j},\ f(x^{(i)}) - \varepsilon f(x^{(t_j)}) \leq \tilde{f}^i \leq f(x^{(i)}) + \varepsilon f(x^{(t_j)}) .
    \end{equation*}

    By construction of the sequence of $t_j$, we know that for every $i$, if we take $t_j$ to be smallest such that $i \leq t_j$, then $f(x^{(t_j)}) \leq 2 f(x^{(i)})$, and the claim follows.
\end{proof}

\begin{theorem}
	For any $p\in(0,2]$ there is an insertion-only streaming algorithm that provides strong tracking guarantees for estimating the $\ell_p$-norm of the frequency vector with probability $1 - \delta$ and multiplicative error $1+\varepsilon$, with space usage bounded by $\Oh\left(\frac{\log m + \log n}{\varepsilon^2} (\log \varepsilon^{-1} + \log \delta^{-1} + \log \log m)\right)$ bits.
\end{theorem}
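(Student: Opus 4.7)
The proof is a direct combination of the weak tracking algorithm (\Theorem{weak-tracking-alg}) with the reduction from strong to weak tracking (\Lemma{strong-tracking-reduction}). The plan is to instantiate the weak tracking algorithm with a suitably smaller failure probability and with accuracy parameter $\varepsilon/2$, and then invoke the reduction to convert weak tracking into strong tracking.

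To apply \Lemma{strong-tracking-reduction}, I need to check that $f(x) = \|x\|_p$ meets the hypotheses: it is monotone with respect to $\preceq$ on $\bR^n_{\geq 0}$, and $f(e_i) = \|e_i\|_p = 1$ for every standard basis vector. Hence the reduction applies, and it yields strong tracking with failure probability at most $\delta' \log f(x^{(m)})$ whenever the underlying weak tracking algorithm has failure probability $\delta'$. For an insertion-only stream of length $m$, the frequency vector satisfies $\|x^{(m)}\|_1 \leq m$, and since $\|x^{(m)}\|_p \leq \|x^{(m)}\|_1 \leq m$ for $p \geq 1$ and $\|x^{(m)}\|_p \leq n^{1/p - 1/2} \|x^{(m)}\|_2 \leq n^{1/p} m$ for $p < 1$ (actually simpler: $\|x^{(m)}\|_p^p \leq \sum_i x^{(m)}_i \cdot (\max_j x^{(m)}_j)^{p-1} \leq m \cdot m^{p-1} = m^p$ when $p < 1$ by monotonicity, giving $\|x^{(m)}\|_p \leq m$ in both regimes). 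Thus $\log f(x^{(m)}) = \Oh(\log m)$ regardless of $p \in (0,2]$.

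Now the quantitative plug-in: set $\delta' = \delta / \Theta(\log m)$ so that $\delta' \log f(x^{(m)}) \leq \delta$, and run the weak tracking algorithm of \Theorem{weak-tracking-alg} with parameters $\varepsilon/2$ and $\delta'$. The resulting algorithm satisfies weak tracking with accuracy $\varepsilon/2$ and failure probability $\delta'$, so by \Lemma{strong-tracking-reduction} it also satisfies strong tracking with accuracy $\varepsilon$ and failure probability at most $\delta$. The space cost is
\begin{equation*}
\Oh\!\left(\frac{\log m + \log n}{\varepsilon^2}\bigl(\log \varepsilon^{-1} + \log(\delta')^{-1}\bigr)\right)
=
\Oh\!\left(\frac{\log m + \log n}{\varepsilon^2}\bigl(\log \varepsilon^{-1} + \log \delta^{-1} + \log \log m\bigr)\right)
\end{equation*}
bits, as claimed.

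There is really no substantive obstacle to overcome, since both ingredients are already in place; the only point one has to be slightly careful about is the bound $\log \|x^{(m)}\|_p = \Oh(\log m)$, which is what makes the overhead from strong-to-weak reduction merely an additive $\log \log m$ inside the parenthesis rather than a larger multiplicative factor.
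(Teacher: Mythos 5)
Your proof follows exactly the paper's route: combine \Theorem{weak-tracking-alg} with \Lemma{strong-tracking-reduction}, running the weak tracker at failure probability $\delta' = \delta/\Theta(\log m)$, and observe that $\log \|x^{(m)}\|_p = \Oh(\log m)$ so the reduction costs only an additive $\log\log m$ inside the parenthesis. This is the same argument as the paper's one-line proof.

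One small caution: your bound for the case $p<1$ is stated with both inequalities reversed. Since $p-1<0$, from $x_i\leq \max_j x_j$ one gets $x_i^{p-1}\geq(\max_j x_j)^{p-1}$, so $\sum_i x_i(\max_j x_j)^{p-1}$ is a \emph{lower} bound on $\|x\|_p^p$, not an upper bound, and likewise $(\max_j x_j)^{p-1}\geq m^{p-1}$. The conclusion $\|x^{(m)}\|_p\leq m$ for $p<1$ is in fact false. The correct elementary bound is: since nonzero entries satisfy $x_i\geq 1$ and $p\leq 1$, we have $x_i^p\leq x_i$, hence $\|x\|_p^p\leq\sum_i x_i=m$ and $\|x^{(m)}\|_p\leq m^{1/p}$. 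For constant $p>0$ this still gives $\log\|x^{(m)}\|_p=\Oh(\log m)$, so your final conclusion is unaffected; only the intermediate inequality needs fixing.
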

\begin{proof}
	This follows from \Lemma{weak-tracking-space} and \Lemma{strong-tracking-reduction} by observing that after a sequence of $m$ insertions, the $\ell_p$ norm of the frequency vector is bounded by $m^2$, i.e. $\log (\|x^{(m)}\|_p) = \Oh(\log m)$.
\end{proof}

\newcommand{\etalchar}[1]{$^{#1}$}

\end{document}